\newtheorem{theorem}{Theorem}[section]
\newtheorem{corollary}[theorem]{Corollary}
\newtheorem{proposition}[theorem]{Proposition}
\newtheorem{lemma}[theorem]{Lemma}
\newtheorem{conjecture}[theorem]{Conjecture}
\theoremstyle{definition}
\newtheorem{definition}[theorem]{Definition}
\title{\bf Almost Envy-Freeness in Group Resource Allocation}
\author[1]{Maria Kyropoulou}
\author[2]{Warut Suksompong}
\author[1]{Alexandros A. Voudouris}
\affil[1]{School of Computer Science and Electronic Engineering, University of Essex}
\affil[2]{Department of Computer Science, University of Oxford}
\date{}
\begin{document}

\maketitle

\begin{abstract}
We study the problem of fairly allocating indivisible goods between groups of agents using the recently introduced relaxations of envy-freeness. We consider the existence of fair allocations under different assumptions on the valuations of the agents. In particular, our results cover cases of arbitrary monotonic, responsive, and additive valuations, while for the case of binary valuations we fully characterize the cardinalities of two groups of agents for which a fair allocation can be guaranteed with respect to both envy-freeness up to one good (EF1) and envy-freeness up to any good (EFX). Moreover, we introduce a new model where the agents are not partitioned into groups in advance, but instead the partition can be chosen in conjunction with the allocation of the goods. In this model, we show that for agents with arbitrary monotonic valuations, there is always a partition of the agents into two groups of any given sizes along with an EF1 allocation of the goods. We also provide an extension of this result to any number of groups.
\end{abstract}

\section{Introduction}
Fairness is one of the primary desiderata in decision-making procedures involving multiple agents. For instance, machine learning researchers have recently studied how to design classification systems that do not discriminate based on sensitive attributes such as race or gender \citep{DworkHaPi12}. Another problem that is ubiquitous in every society is that of allocating resources among its members. The study of how the allocation can be done fairly, commonly known as \emph{fair division}, has found applications ranging from settling divorce disputes \citep{BramsTa96} to sharing apartment rent \citep{Su99,GalMaPr17}.

The vast majority of the fair division literature has focused on allocating resources among \emph{individual} agents. However, in many practical situations the resources need to be allocated among \emph{groups} of agents. The agents in each group share the same set of resources, but may have different preferences over them. For instance, the books allocated to a library can be enjoyed by all of its members, and it may be the case that some members prefer detective novels while others would rather read science fiction. Another example is the division of household goods between families; different members of a family may have contrasting opinions on the television or the sofa in their apartment.

The group aspect of fair division was introduced independently by \citet{SegalhaleviNi16} and \citet{ManurangsiSu17}. Segal-Halevi and Nitzan investigated the allocation of \emph{divisible} goods such as cake or land. In contrast, Manurangsi and Suksompong studied the group allocation of \emph{indivisible} goods like books and cars. Both of these works used the fairness notion of \emph{envy-freeness}---an agent is said to be \emph{envy-free} if she finds her group's share to be as least as good as the share of any other group. While envy-freeness cannot be guaranteed even when allocating indivisible goods among individuals (consider two agents who quarrel over a single valuable good), Manurangsi and Suksompong showed that if the agents' utilities for the goods are drawn at random, an envy-free allocation exists with high probability in the group setting as the number of agents and goods grows. \citet{SegalhaleviSu18} then introduced the concept of \emph{democratic fairness}, which aims to satisfy a certain fraction of the agents in each group. Among other fairness notions, they considered \emph{envy-freeness up to one good (EF1)}, which means that while some agent may envy another group under the given allocation, the envy can be eliminated by removing a single good from the other group’s share. Segal-Halevi and Suksompong showed that for two groups, there always exists an allocation that is EF1 for at least $1/2$ of the agents in each group, and the factor $1/2$ cannot be improved. We refer to \citep{Suksompong18-2} for an overview of the group fair division literature.

While the aforementioned works provide different methods for extending individual fair division to the group setting, in some situations it may be important that all agents receive a fairness guarantee with certainty regardless of their valuations. \citet{Suksompong18} showed the possibilities and limitations of using the \emph{maximin share} notion to guarantee every agent a fair share. In this work, we study the extent to which the recently introduced relaxations of envy-freeness, most notably EF1 and another notion called EFX, can be used for the same purpose. We show that while EF1 is surprisingly robust and can be guaranteed in a number of group settings, this is not the case for EFX. In addition, we introduce a new model in which the partition of the agents into groups is not fixed in advance, but instead can be chosen in conjunction with the allocation of the goods. This model captures settings where agents {(or a central authority)} can choose the group that they want to be part of, such as membership in a library or gym.

\subsection{Our Results}

With the exception of Section~\ref{sec:variable-many}, we assume that the goods are allocated between two groups of agents. While this may seem restrictive at first glance, we remark here that fair division between two \emph{individual} agents, which is much more restrictive, has received a significant amount of attention in the literature (e.g., \citep{BramsFi00,BramsKiKl12,BramsKiKl14,Aziz16,KilgourVe18}). Indeed, as we will see, the setting of two groups is quite rich and already allows for many interesting, non-trivial results.

In Section~\ref{sec:fixed-binary}, we assume that the agents in the two predetermined groups have binary valuations, i.e., each agent either desires each good or not. We characterize the cardinalities of the groups for which an EF1 or EFX allocation always exists.
Additionally, we consider a stronger variant of EFX introduced by \citet{PlautRo18}, which we refer to as EFX$_0$. We prove a very strong negative result for the group fairness setting,
implying that this fairness notion can only be guaranteed when both groups are singletons.

Next, in Section~\ref{sec:fixed-general}, we consider more general classes of valuations.
If one group is a singleton and the other group consists of two agents, we show that a \emph{balanced} EF1 allocation always exists provided that the agents are endowed with responsive valuations, a general class that contains the well-studied class of additive valuations. Balancedness means that the sizes of the two bundles differ by at most one. Moreover, we establish a surprising connection between our group fair division problem and a class of graphs known as \emph{generalized Kneser graphs}. We show that if a conjecture by \citet{JafariAl17} on the chromatic number of particular graphs from this class is true, it would imply that a balanced EF1 allocation exists whenever the two groups contain a total of at most five agents with arbitrary monotonic valuations. This bound would be tight due to our results in Section~\ref{sec:fixed-binary}.

Finally, in Section~\ref{sec:variable} we examine the newly introduced setting where we assume that the partition of the agents into groups is no longer fixed and can be chosen along with the allocation of the goods. Our results indicate that if a central authority or the agents themselves have the power to decide which group to join, then fair allocations are much easier to achieve. In particular, we show that for two groups of agents with arbitrary monotonic valuations, it is always possible to simultaneously obtain a balanced partition of the agents and a balanced EF1 allocation of the goods. In addition, for any given sizes of the two groups, there is a partition of the agents conforming to those sizes together with an EF1 allocation of the goods. We also present an extension of this result to any number of groups.

\subsection{Further Related Work}

The fairness notions EF1 and EFX were introduced by \citet{LiptonMaMo04} and \citet{CaragiannisKuMo16}, and studied in several papers over the last few years \citep{PlautRo18,AmanatidisBiMa18,BiswasBa18,BeiLuMa19,BiloCaFl19,OhPrSu19,AmanatidisBiFi20,BercziBeBo20,Suksompong20}. For individual fair division, it is known that an EF1 allocation is guaranteed to exist for any number of agents with monotonic valuations, while the question remains open for EFX.

A paper by \citet{GhodsiLaMo18} addressed the problem of \emph{rent division} for groups. In addition to determining the allocation of the rooms, the rent of the apartment must be divided among the agents. Like us, Ghodsi et al. also considered a model where the groups are not predetermined.

Another line of research has also considered group fairness in resource allocation but using a different kind of fairness notions than ours \citep{Berliant92,Husseinov11,TodoLiHu11,AleksandrovWa18,ConitzerFrSh19,AzizRe19}. In these papers, the resources are allocated to \emph{individual} agents, and the aim is to minimize the envy that arises between groups of these agents. In contrast, in our work the resources are allocated to \emph{groups} of agents and shared as public goods among the agents within each group.

\cite{BenabbouChEl19} studied a setting closer to ours, where the resources are allocated to groups.
However, these resources are then further assigned to individual agents in each group, so each agent does not derive full utility from all resources allocated to their group.
\citet{ChakrabortyIgSu20} considered individual fair division but allowed agents to have different \emph{weights}. Their setting can therefore capture group fair division, with weights corresponding to the group sizes. Nevertheless, in this interpretation, each group is only represented by a single preference, whereas our model allows differing preferences within the same group.

\citet{BiswasBa18} examined cardinality constraints in individual fair division, where the goods are categorized and there is a limit on the number of goods from each category that can be allocated to each agent. Our balancedness notion can be seen as a special case of these constraints.

The group resource allocation problem with variable groups is similar to coalition formation problems \citep{DrezeGr80,BogomolnaiaJa02} in the sense that there is flexibility in how the agents form groups. However, in coalition formation the utilities of an agent depend on how the agents are grouped and there are no goods involved, whereas in our setting these utilities depend on how the goods are distributed and not how the agents are grouped.

\section{Preliminaries}

Let $G=\{g_1,\ldots,g_m\}$ denote the set of goods, and $A$ the set of $n$ agents. A \emph{bundle} is a subset of $G$. The agents are partitioned into $k\geq 2$ groups. We assume in Sections~\ref{sec:fixed-binary} and \ref{sec:fixed-general} that this partition is fixed in advance, and in Section~\ref{sec:variable} that the partition is variable and can be chosen. Denote by $n_i$ the size of group~$i$ (so $n=\sum_{i=1}^k n_i$), and $a_{ij}$ the $j$th agent of group $i$. The agents in each group will be collectively allocated a subset of $G$; denote by $B_i$ the bundle allocated to group $i$ so that $B_i\cap B_j=\emptyset$ for any $i\neq j$ and $\cup_{i=1}^k B_i=G$. A partition of the agents is called \emph{balanced} if $|n_i-n_j|\leq 1$ for any $i,j$. Similarly, an allocation of the goods is called balanced if $||B_i|-|B_j||\leq 1$ for any $i,j$.

Each agent $a_{ij}$ has some non-negative valuation $u_{ij}(G')$ for each set of goods $G'\subseteq G$; for convenience we write $u_{ij}(g)$ instead of $u_{ij}(\{g\})$ for a good $g\in G$. Let $\textbf{u}_{ij}:=(u_{ij}(g_1),\dots,u_{ij}(g_m))$ be the valuation vector of agent $a_{ij}$ for individual goods. 
When the identity of the agent is not important, we will drop the indices $i,j$ from the valuation $u_{ij}$.
We assume that valuations are 
\begin{itemize}
\item \emph{monotonic}: $u(G_1)\leq u(G_2)$ for any $G_1\subseteq G_2\subseteq G$, and
\item \emph{normalized}: $u(\emptyset)=0$.
\end{itemize}

\begin{definition}
A valuation function $u$ is said to be 
\begin{itemize}
\item \emph{responsive} if $u(G'\cup\{g\})\geq u(G'\cup\{\overline{g}\})$ for any $G'\subseteq G$ and any goods $g,\overline{g}\not\in G'$ such that $u(g)\geq u(\overline{g})$,
\item \emph{additive} if $u(G')=\sum_{g\in G'}u(g)$ for any $G'\subseteq G$, and
\item \emph{binary} if it is additive and $u(g)\in\{0,1\}$ for all $g\in G$.
\end{itemize}
\end{definition}

Note that every additive valuation is responsive. Additive valuations are often assumed in recent fair division literature
\citep{CaragiannisKuMo16,AmanatidisBiMa18,BiswasBa18,ConitzerFrSh19}.
An \emph{instance} consists of agents, goods, and utility functions (and in the model of Sections~\ref{sec:fixed-binary} and \ref{sec:fixed-general}, the partition of agents into groups). In Section~\ref{sec:variable}, we simply denote the agents by $a_1,\dots,a_n$ and their valuations by $u_1,\dots,u_n$.

We are now ready to define the fairness notions that we consider in this paper.

\begin{definition}
An allocation $(B_1,\dots,B_n)$ is said to be
\begin{itemize}
    \item \emph{envy-free for agent $a_{ij}$} if $u_{ij}(B_i)\geq u_{ij}(B_{i'})$ for any $i'$;
    \item \emph{envy-free up to any good (EFX$_0$) for agent $a_{ij}$} if for any $i'$ and any good $g\in B_{i'}$, we have $u_{ij}(B_i)\geq u_{ij}(B_{i'}\backslash\{g\})$;
    \item \emph{envy-free up to any positively valued good (EFX) for agent $a_{ij}$} if for any $i'$ and any good $g\in B_{i'}$ such that $u_{ij}(g)>0$, we have $u_{ij}(B_i)\geq u_{ij}(B_{i'}\backslash\{g\})$;
    \item \emph{envy-free up to $c$ goods (EF$c$) for agent $a_{ij}$}, for a given positive integer $c$, if for any $i'$ there is a set $B\subseteq B_{i'}$ with $|B|\leq c$ such that $u_{ij}(B_i)\geq u_{ij}(B_{i'}\backslash B)$.
\end{itemize}
An allocation is said to be \emph{envy-free} if it is envy-free for every agent. When there are two groups, we say that an agent finds a bundle to be \emph{envy-free} if the allocation that assigns the bundle to her group and the complement bundle to the other group is envy-free for her. Analogous definitions hold for EFX$_0$, EFX, and EF$c$.
\end{definition}

EFX$_0$ is a variant of EFX introduced by \citet{PlautRo18}.\footnote{In their paper this property is simply called EFX; we rename it to avoid confusion with the original definition of \citet{CaragiannisKuMo16}, which we refer to as EFX.} For additive valuations, it is clear that each property in the list is stronger than the next, with EFX implying EF1. We will only consider EFX and EFX$_0$ in the context of additive valuations. In Sections~\ref{sec:fixed-general} and \ref{sec:variable} we only state results for EFX, but all of these results also hold for EFX$_0$.

\section{Fixed Groups with Binary Valuations}
\label{sec:fixed-binary}

In this section, we assume that the agents have binary valuations and are partitioned in advance into two groups of size $n_1$ and $n_2$. Note that any non-existence result for $(n_1,n_2)$ yields an analogous result for $(n_1',n_2')$ with $n_1'\geq n_1$ and $n_2'\geq n_2$, since in the latter case a subset of $n_1$ agents from the first group and a subset of $n_2$ agents from the second group still need to consider the allocation fair. Similarly, an existence result for $(n_1,n_2)$ yields a corresponding result for $(n_1',n_2')$ with $n_1'\leq n_1$ and $n_2'\leq n_2$.

We begin by considering the notions EFX and EF1. In fact, for binary valuations one can easily verify that EFX and EF1 are equivalent, so it suffices to consider only EF1.
We first present two results that establish the existence of an EF1 allocation for the cases $(n_1,n_2)=(5,1)$ and $(3,2)$. Before we present the proofs, we give a high-level overview of the arguments that we use. First, observe that since the valuations are binary, each good can be described by the set of agents who desire it. If two goods are desired by the same set of agents, we can allocate one to each group and then search for an EF1 allocation of the reduced instance with the remaining goods. Hence we may assume that every good is desired by a distinct subset of agents. This reduces the problem to a finite (but still large) number of possible instances. We then perform other preprocessing steps to reduce the number of cases even further. For example, if an agent desires an odd number of goods, the requirement that EF1 imposes on the agent remains the same when we perturb the valuation of the agent so that she no longer desires an arbitrary good. As a result, we may assume that every agent desires an even number of goods.

\begin{theorem}
\label{thm:binary-5-1}
For $(n_1,n_2)=(5,1)$ and binary valuations, an EF1 allocation always exists.
\end{theorem}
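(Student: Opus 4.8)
The plan is to translate the EF1 requirement into a signed-summation condition and then exploit parity. Assign to each good $g$ a sign $\sigma(g)=+1$ if $g\in B_1$ and $\sigma(g)=-1$ if $g\in B_2$, and for each agent let $D$ denote the set of goods she values at $1$. Since the valuations are binary, a group-$1$ agent with desire set $D$ is EF1 exactly when $|D\cap B_1|\ge |D\cap B_2|-1$, i.e.\ $\sum_{g\in D}\sigma(g)\ge -1$, while the lone agent $b$ of group $2$, with desire set $D_b$, is EF1 exactly when $\sum_{g\in D_b}\sigma(g)\le 1$. Thus the goal becomes: choose $\sigma\colon G\to\{\pm 1\}$ so that the five signed sums over the group-$1$ desire sets are at least $-1$ and the signed sum over $D_b$ is at most $1$.

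Next I would carry out the two reductions flagged in the overview, now phrased in terms of these signed sums. First, two goods desired by exactly the same set of agents can be placed one in each bundle; for every agent this pair contributes $+1-1=0$, so all signed sums are unchanged and I may assume each of the $2^6$ possible desire-types occurs at most once, whence $|G|\le 63$. Second, I apply the parity reduction. If a group-$1$ agent desires an odd number $2t+1$ of goods, I \emph{strengthen} her constraint by lowering her value for one desired good to $0$: the resulting even set of size $2t$ forces $\sum\sigma\ge 0$ rather than merely $\ge -1$, yet it still demands only $t$ of her goods on her own side, exactly as the original set of size $2t+1$ did, so any signing meeting the strengthened constraint also meets the original. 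Arguing symmetrically for $b$, I may assume every desire set has even size, at which point the targets sharpen to $\sum_{g\in D_i}\sigma(g)\ge 0$ for $i=1,\dots,5$ and $\sum_{g\in D_b}\sigma(g)\le 0$.

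What then remains is a directional rounding problem: the fractional split $x_g=\tfrac12$ makes every signed sum exactly $0$, so I must round to $\{\pm1\}$ so that five sums land $\ge 0$ and one lands $\le 0$. A useful device is to dispatch $b$ first: since $|D_b|$ is even, pairing up the goods of $D_b$ and giving the two members of each pair opposite signs forces $\sum_{g\in D_b}\sigma(g)=0$ automatically, leaving the choice of pairing, the sign inside each pair, and the signs of all goods outside $D_b$ available to satisfy the five group-$1$ constraints. The main obstacle is exactly this simultaneous satisfaction: the five sets share goods with one another and with $D_b$, so no single greedy assignment suffices, and I expect to finish by a structured case analysis over the few surviving desire-types—organized, say, by how the goods of $D_b$ are distributed among the five group-$1$ agents—using the pairing device and the parity constraints to keep the number of cases bounded. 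I anticipate that it is precisely the interaction among five ``$\ge 0$'' constraints that pins down $n_1=5$ as the threshold, with the corresponding argument breaking for $n_1=6$, matching the tightness asserted elsewhere in this section.
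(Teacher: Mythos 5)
Your reformulation is sound: for binary valuations, EF1 for a group-1 agent with desire set $D$ is exactly the signed-sum condition $\sum_{g\in D}\sigma(g)\ge -1$, and your parity reduction (making every desire set even, so that the targets sharpen to $\ge 0$, resp.\ $\le 0$) is the same move as the paper's preprocessing step in which an agent with an odd desire set is made to ``undesire'' one good. But the proposal stops exactly where the proof has to start. Everything after ``what then remains is a directional rounding problem'' is a statement of intent, not an argument: you concede that no greedy assignment works and that you ``expect to finish by a structured case analysis,'' yet that case analysis is the entire substance of the theorem, and nothing in your setup makes it tractable. After your two reductions (identical desire types paired off, parity), there can still be up to $2^6-1$ distinct desire types; organizing them by how $D_b$ meets the five group-1 desire sets gives an enormous case tree, and your pairing device for $D_b$ only disposes of $b$'s constraint while leaving the genuinely hard part --- the five simultaneous $\ge 0$ constraints --- untouched.

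What makes the paper's case analysis finite and short is a much stronger reduction that you are missing: whenever there are two disjoint, equal-size sets of goods $G_1,G_2$ such that every group-1 agent desires at least as many goods of $G_1$ as of $G_2$ (in particular, whenever one good's desire set contains another's), one can allocate $G_1$ to the large group and $G_2$ to the singleton and recurse. Combined with the observation that goods undesired by $b$ go to group 1 for free (so one may assume $b$ desires everything and represent each good as a subset of $\{1,\dots,5\}$), this dominance reduction forces every surviving good to have size $2$ or $3$, with an even number of each, and a complementation trick (replacing each good by its complement) further gives $k_2\ge k_3$; only then does the analysis collapse to the cases $k_2\in\{2,4,6,8,10\}$, each of which is killed by exhibiting a concrete dominated pair $(G_1,G_2)$ that the reduction would already have consumed. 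Indeed, the paper proves the stronger statement that after all reductions \emph{no goods remain at all}: the preprocessing itself constructs the allocation. Without the dominance reduction, or some equally powerful structural lemma, your plan has no mechanism for pruning the surviving types, so the gap is not a matter of omitted routine detail --- the key idea of the proof is absent.
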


\begin{proof}
For the sake of convenience, we slightly modify the notation used in this proof. Consider an instance with two groups of agents $A = \{1, 2,3,4, 5\}$ and $B=\{b\}$. For a set of goods $G' \subseteq G$, let $d_i(G')$ be the subset of goods in $G'$ that agent $i$ desires. For binary valuations, each good $g \in G$ can be represented by the set of agents who desire it, i.e., we represent $g$ by the set $\{i\in A\cup B \mid g \in d_i(G) \}$. First, we perform a number of preprocessing steps as follows.

\begin{itemize}
\item[(P1)] We allocate each good $g \not\in d_b(G)$ to group $A$; this reduces the problem to finding an EF1 allocation of the remaining goods. Hence, we may assume that $d_b(G)=G$, and simply represent each good $g$ as a subset of $\{1,2,3,4,5\}$.

\item[(P2)] If there are an odd number of goods, then we allocate an arbitrary good to group $A$; this again reduces the problem to finding an EF1 allocation of the remaining goods. Hence, we may assume that there are an even number of goods.

\item[(P3)] If there are two goods $g$ and $\overline{g}$ such that $\overline{g} \subseteq g$, then we allocate $g$ to group $A$ and $\overline{g}$ to agent $b$. More generally, if there are two disjoint sets of goods $G_1$ and $G_2$ of equal size such that $|d_i(G_1)| \geq |d_i(G_2)|$ for every agent $i \in A$, then we allocate $G_1$ to group $A$ and $G_2$ to agent $b$. Therefore, in the following, we may assume that every good is represented by a distinct set of agents, and there is no subset-superset relations among the goods.

\item[(P4)] For any agent $i \in A$ that desires an odd number of goods, we assume a modified valuation function $\tilde{u}_i$ such that $\tilde{u}_i(g^*)=0$ for some $g^*$ that is desired by agent $i$ according to $u_i$, and $\tilde{u}_i(g)=u_i(g)$ for $g \neq g^*$. This again reduces the problem to finding an EF1 allocation for the instance with modified valuation functions. After we perform this step, it may be possible to perform the previous steps again. In such a case, we keep performing these steps until no longer possible.
\end{itemize}

After the preprocessing, there are an even number of goods and every agent desires an even number of goods. We claim that actually there are no goods left. Assume otherwise.
\begin{itemize}
\item If there is a good $g$ of size $0$ or $5$, then no other good $g'$ can appear since $g'$ would either be a superset or a subset of $g$, and step (P3) would have already allocated both $g$ and $g'$. Therefore, there is only one good, contradicting the assumption that the number of goods is even.

\item If there is a singleton good $g=\{i\}$ for some $i \in \{1,2,3,4,5\}$, then no other good $g'$ containing $i$ can appear, since otherwise $g'$ would be a superset of $g$. Therefore, agent $i$ only desires one good, contradicting the assumption that every agent desires an even number of goods.

\item If there is a good $g$ of size $4$, say $g=\{1,2,3,4\}$, then any other good $g'$ must contain $5$, since otherwise $g'$ would be a subset of $g$. Since there are an even number of goods, this means that agent $5$ desires an odd number of goods, again contradicting the assumption that all agents in $A$ desire an even number of goods.
\end{itemize}

Therefore, all goods correspond to subsets of size $2$ or $3$. Let $k_2$ and $k_3$ denote the number of goods of size $2$ and $3$, respectively. Since all agents desire an even number of goods, $k_3$ must be even, and since there are an even number of goods, $k_2$ must be even as well. It suffices to consider the case where $k_2 \geq k_3$. This is without loss of generality because if $k_2 < k_3$, we can replace each good $g$ by its complement $A \backslash g$. In the new instance, there are still an even number of goods, every agent still desires an even number of goods, and $k_2\geq k_3$. Moreover, if there is a preprocessing step that can be applied to the new instance, a corresponding step can be applied to the original instance.

 Recall that $|A|=5$ and every subset is represented by at most one good, so $k_2$ cannot exceed the number of sets of size $2$ from a universe of size $5$, which is $10$. Since $k_2$ is even, $k_2\geq k_3$, and $k_2+k_3>0$, we have $k_2\in\{2,4,6,8,10\}$. We consider each of these cases in turn.
\begin{itemize}
\item Case 1: $k_2=10$. This means that all goods of size $2$ are present. If we set $G_1=\{\{1,2\},\{3,4\}\}$ and $G_2=\{\{1,3\},\{2,4\}\}$, then $G_1$ and $G_2$ should have already been allocated in step (P3), a contradiction.

\item Case 2: $k_2= 8$. Consider the two goods of size $2$ that are absent. If they share one element, we may assume without loss of generality that they are $\{3,5\}$ and $\{4,5\}$; if they do not share any element, we may assume that they are $\{2,3\}$ and $\{4,5\}$. Either way, the goods $\{1,2\},\{1,3\},\{2,4\},\{3,4\}$ are present. If we set $G_1=\{\{1,2\},\{3,4\}\}$ and $G_2=\{\{1,3\},\{2,4\}\}$, then $G_1$ and $G_2$ should have already been allocated in step (P3), a contradiction.

\item Case 3: $k_2=6$. First, note that if $k_3\geq 2$, then since we have already allocated all goods that form subset-superset relations in step (P3), we must have $k_2\leq 5$. Hence, $k_3=0$. If at least two of the agents desire four goods each, there must be at least seven goods in total, which is not the case. Since every agent desires an even number of goods, the only possibility is that one of the agents desires four goods and the remaining four agents two goods each. Assume without loss of generality that the goods $\{1,2\},\{1,3\},\{1,4\},\{1,5\},\{2,3\},\{4,5\}$ are present. If we set $G_1=\{\{1,2\},\{1,3\},\{4,5\}\}$ and $G_2=\{\{1,4\},\{1,5\},\{2,3\}\}$, then $G_1$ and $G_2$ should have already been allocated in step (P3), a contradiction.

\item Case 4: $k_2=4$. Consider the four goods of size 2 that are present. Since the corresponding sets cannot be all disjoint, we may assume that two of them are $\{1,2\}$ and $\{1,3\}$. The subsets of size 3 that can be present are $\{1,4,5\},\{2,3,4\},\{2,3,5\},\{2,4,5\},\{3,4,5\}$.
\begin{itemize}
\item Case 4.1: Of the remaining two sets of size 2, at least one set contains 1. Assume without loss of generality that $\{1,4\}$ is present, so $\{1,4,5\}$ cannot be present. Since agent 1 desires an even number of goods, $\{1,5\}$ must be present. Using the constraint that each agent desires an even number of goods, one can check that $\{2,3,4\},\{2,3,5\},\{2,4,5\},\{3,4,5\}$ must all be present. However, sets $G_1=\{\{1,2\},\{3,4,5\}\}$ and $G_2=\{\{1,3\},\{2,4,5\}\}$ should have been allocated in step (P3), a contradiction.
\item Case 4.2: Of the remaining two sets of size 2, neither set contains 1. At least one of these sets must contain 2 or 3. Since agent 1 desires an even number of goods, $\{1,4,5\}$ cannot be present. Suppose first that $\{2,3\}$ is present, so $\{2,3,4\}$ and $\{2,3,5\}$ cannot be present. One can check that no matter whether $\{2,4,5\},\{3,4,5\}$ are both present or both absent, there is no way to make every agent desire an even number of goods. So $\{2,3\}$ is not present. Assume without loss of generality that $\{2,4\}$ is present, so $\{2,3,4\}$ and $\{2,4,5\}$ cannot be present. If $\{2,3,5\}$ and $\{3,4,5\}$ are both absent, the fourth set of size 2 must be $\{3,4\}$, and we may obtain a contradiction as in Case~2. Otherwise, $\{2,3,5\}$ and $\{3,4,5\}$ are both present, and the fourth set of size 2 must be $\{2,3\}$. But we have already deduced that $\{2,3\}$ must be absent, a contradiction.
\end{itemize}

\item Case 5: $k_2=2$. Consider the two goods of size 2 that are present. Since these goods have to be distinct and all agents desire an even number of goods, we cannot have $k_3=0$, so $k_3=2$.
\begin{itemize}
\item Case 5.1: These two goods share an agent. Assume without loss of generality that they are $\{1,2\}$ and $\{1,3\}$. Moreover, the two goods of size $3$ cannot be a superset of $\{1,2\}$ or $\{1,3\}$. The only possibility is that they are $\{2,4,5\}$ and $\{3,4,5\}$. If we set $G_1=\{\{1,2\},\{3,4,5\}\}$ and $G_2=\{\{1,3\},\{2,4,5\}\}$, then $G_1$ and $G_2$ should have already been allocated in step (P3), a contradiction.

\item Case 5.2: These two goods are disjoint. Assume without loss of generality that they are $\{1,2\}$ and $\{3,4\}$. Then the two goods of size $3$ must be either $\{1,3,5\}$ and $\{2,4,5\}$, or $\{1,4,5\}$ and $\{2,3,5\}$. In either case, we may set $G_1$ to contain the two goods of size $3$ and $G_2=\{\{1,2\},\{3,4\}\}$ and obtain a contradiction as before.
\end{itemize}
\end{itemize}
All cases have been exhausted, and the proof is complete.
\end{proof}

\begin{theorem}
\label{thm:binary-3-2}
For $(n_1,n_2)=(3,2)$ and binary valuations, an EF1 allocation always exists.
\end{theorem}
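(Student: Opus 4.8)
The plan is to mirror the structure of the proof of Theorem~\ref{thm:binary-5-1}, applying the same preprocessing steps (P1)--(P4) to reduce the problem to a finite case analysis. Since we now have two groups $A=\{1,2,3\}$ and $B=\{b_1,b_2\}$, the main difference is that step (P1) no longer applies directly: a good may be desired by one, both, or neither of the agents in $B$. I would first handle goods desired by neither agent in $B$ by allocating them to group $A$ (they give $B$ no envy and only help $A$). For a pair of goods each desired by exactly the same subset of agents, I would split one to each group as in the original argument. The goal, as before, is to reduce to an instance where every good is represented by a \emph{distinct} subset of $A\cup B$, there are no subset-superset relations that (P3) could exploit, and---after the parity-adjusting step (P4)---every agent desires an even number of goods.

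The key conceptual point is that the EF1 condition is now \emph{two-sided} in a nontrivial way: we must ensure EF1 simultaneously for all three agents of $A$ (each comparing $B_A$ against $B_B$) and for both agents of $B$ (each comparing $B_B$ against $B_A$). I would encode each good by its desire-set, now a subset of the five-element universe $\{1,2,3,b_1,b_2\}$, and reduce to small sizes using the same counting: after preprocessing, goods of size $0$ or $5$ are ruled out (they force subset-superset relations or a single good), singleton goods are ruled out by the even-desire assumption, and size-$4$ goods are ruled out by a parity argument on the one agent they omit. This should leave only goods whose desire-sets have size $2$ or $3$, and I would again split by the count $k_2$ of size-$2$ goods, using the complementation trick ($g\mapsto (A\cup B)\setminus g$) to assume without loss of generality that $k_2\geq k_3$.

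The main obstacle I anticipate is that the case analysis is genuinely different from the $(5,1)$ case, because the universe is no longer a symmetric group of five interchangeable agents: the two agents $b_1,b_2$ of group $B$ play a structurally different role from the three agents of $A$, so the symmetry that simplified Theorem~\ref{thm:binary-5-1} is partially broken. Concretely, when I try to exhibit a pair $(G_1,G_2)$ of equal-size disjoint good-sets with $|d_i(G_1)|\geq|d_i(G_2)|$ for all $i\in A$ \emph{and} the reverse-type domination needed to keep both agents of $B$ happy, I must respect which goods are desired by $b_1$ and $b_2$. I would therefore carry out the enumeration keeping track of the $B$-coordinates separately, and in each subcase produce an explicit balanced split that is verified to be EF1 for all five agents (equivalently, show (P3) should already have fired, yielding a contradiction with the assumption that goods remain).

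The expected outcome is that, exactly as in Theorem~\ref{thm:binary-5-1}, every surviving case collapses to a configuration where a suitable equal-size split $(G_1,G_2)$ exists and hence should have been removed by the generalized (P3) step, so no goods can remain after preprocessing; this establishes that an EF1 allocation always exists for $(n_1,n_2)=(3,2)$.
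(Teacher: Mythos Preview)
Your overall strategy—preprocessing followed by finite case analysis—matches the paper, and you correctly identify that the (P3) domination must be \emph{two-sided}: allocate $G_1$ to $A$ and $G_2$ to $B$ when every agent in $A$ weakly prefers $G_1$ and every agent in $B$ weakly prefers $G_2$. However, your plan to organise the case analysis by the total size of the desire-set in the universe $\{1,2,3,b_1,b_2\}$ breaks down at a concrete point.

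The claim that ``size-$4$ goods are ruled out by a parity argument on the one agent they omit'' is false under the correct two-sided (P3). Writing a good as $S^T$ with $S\subseteq\{1,2,3\}$ and $T\subseteq\{b_1,b_2\}$, consider the size-$4$ good $\{1,2\}^{b_1b_2}$. A good $g'$ omitting agent~$3$, say $\{1\}^{b_1}$, is \emph{not} comparable to $\{1,2\}^{b_1b_2}$ under the cross-domination relation (we need $S\supseteq S'$ \emph{and} $T\subseteq T'$, or the reverse, and neither holds here). So the presence of $\{1,2\}^{b_1b_2}$ does not force every other good to contain agent~$3$, and the parity argument fails. In fact, the paper's proof does \emph{not} eliminate such goods: its Cases~3--5 are precisely about which of the goods $\{1,2\}^{b_1b_2},\{1,3\}^{b_1b_2},\{2,3\}^{b_1b_2}$ are present. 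Your parity argument does go through for size-$4$ goods of the form $\{1,2,3\}^{b_i}$ (and the paper uses it there), but not for the other type.

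Consequently, the reduction to ``only size-$2$ and size-$3$ goods remain'' is invalid, and the subsequent $k_2$-versus-$k_3$ enumeration with the complementation trick cannot be carried over. The paper instead keeps the $S^T$ notation explicit and organises the case split by the $T$-coordinate: first showing $\{1,2,3\}^{b_i}$ and $\{i\}^{b_1b_2}$ are ruled out, then branching on which subsets of $\{\{1,2\}^{b_1b_2},\{1,3\}^{b_1b_2},\{2,3\}^{b_1b_2},\{1,2,3\}^{b_1b_2}\}$ survive. You should restructure your case analysis along the $T$-part rather than total size; once you do, the remaining enumeration is of the same flavour as what you describe.
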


\begin{proof}
For the sake of convenience, we slightly modify the notation used in this proof. Consider an instance with two groups of agents $A=\{1,2,3\}$ and $B=\{x,y\}$. We represent each good by the set of agents who desire it. For convenience, we denote a good $g$ as $g=S^T$, where $S \subseteq A$ and $T \subseteq B$ are the subsets of agents who desire $g$ in $A$ and $B$, respectively.

Similarly to the proof of Theorem~\ref{thm:binary-5-1}, we can perform some preprocessing steps so that every good is desired by at least one agent in each group (if $g=S^T$ is such that $S=\emptyset$ or $T=\emptyset$, then we can simply allocate $g$ to group $B$ or group $A$, respectively), and every agent desires an even number of goods (if this is not the case, we modify the valuation function of every such agent by making the agent ``undesire'' an arbitrary good; this does not affect whether the agent considers an allocation to be EF1). We further assume that each agent in $B$ desires at least one good; otherwise the problem reduces to the $(n_1,n_2)=(3,1)$ case, which is already covered by Theorem~\ref{thm:binary-5-1}.

If two goods $g_1 = S_1^{T_1}$ and $g_2=S_2^{T_2}$ are such that $S_1 \supseteq S_2$ and $T_1 \subseteq T_2$, then we can allocate $g_1$ to $A$ and $g_2$ to $B$. Since the agents in $A$ that desire $g_2$ also desire $g_1$, and the agents in $B$ that desire $g_1$ also desire $g_2$, this reduces the problem to computing an EF1 allocation for the remaining goods. In the following we therefore assume that there do not exist two such goods; in particular, there is at most one good of type $S^T$ for each $S\subseteq\{1,2,3\}$ and $T\subseteq\{x,y\}$, and there are at most three goods of type $S^T$ for each $T \subseteq \{x,y\}$. More generally, if there are two disjoint sets of goods $G_1$ and $G_2$ such that every agent in $A$ desires at least as many goods in $G_1$ as in $G_2$, while every agent in $B$ desires at least as many goods in $G_2$ as in $G_1$, we can allocate $G_1$ to $A$ and $G_2$ to $B$. Consequently, we may assume that there do not exist two such sets of goods.

We now show that all goods have been allocated by the above preprocessing steps. By the above discussion, if the good $\{1,2,3\}^x$ is present, then no other good that agent $x$ desires can be present, which means that $x$ desires an odd number of goods, a contradiction. Hence, goods $\{1,2,3\}^x$ and $\{1,2,3\}^y$ are absent. Similarly, if the good $\{1\}^{xy}$ is present, then no other good that agent $1$ desires can be present, meaning that $1$ desires an odd number of goods, again contradicting the assumption that all agents desire an even number of goods. So, goods $\{1\}^{xy}$, $\{2\}^{xy}$, and $\{3\}^{xy}$ are absent. Therefore, the goods of type $S^{xy}$ that can be present are $\{1,2\}^{xy}$, $\{1,3\}^{xy}$, $\{2,3\}^{xy}$, and $\{1,2,3\}^{xy}$. We consider some cases.

\begin{itemize}
    \item Case 1: No good of type $S^{xy}$ is present. Since every agent desires an even number of goods and each agent in $B$ desires at least one good, there are exactly two goods of type $S^x$ and two goods of type $S^y$. Let these goods be  $S_1^x$, $S_2^x$, $T_1^y$ and $T_2^y$; by the above preprocessing steps, there are no subset-superset relations between $(S_1,S_2)$ and $(T_1,T_2)$.
    \begin{itemize}
        \item Case 1.1: $|S_1|=|S_2|=1$. Without loss of generality, assume that $S_1=\{1\}$ and $S_2=\{2\}$. Since each agent desires an even number of goods, the other two goods are either $\{1\}^y$ and $\{2\}^y$, or $\{1,3\}^y$ and $\{2,3\}^y$. In the first case we set $G_1=\{\{1\}^x,\{2\}^y\}$ and $G_2=\{\{1\}^y,\{2\}^x\}$, while in the second case we set $G_1=\{\{1\}^x,\{2,3\}^y\}$ and $G_2=\{\{1,3\}^y,\{2\}^x\}$. Then $G_1$ and $G_2$ should have been allocated during the preprocessing steps (since every agent desires the same number of goods in $G_1$ as in $G_2$), a contradiction.
        \item Case 1.2: $|S_1|=1$ and $|S_2|=2$. Without loss of generality, assume that $S_1=\{1\}$ and $S_2=\{2,3\}$. Since every agent desires an even number of goods, the other two goods can be  $(\{1\}^y,\{2,3\}^y)$, $(\{2\}^y,\{1,3\}^y)$, or $(\{3\}^y,\{1,2\}^y)$. We set $(G_1,G_2)$ to be $(\{\{1\}^x,\{2,3\}^y\},\{\{2,3\}^x,\{1\}^y\})$ in the first case, $(\{\{2,3\}^x,\{1,3\}^y\},\{\{1\}^x,\{2\}^y\})$ in the second case, and $(\{\{2,3\}^x,\{1,2\}^y\},\{\{1\}^x,\{3\}^y\})$ in the third case. Then, again, $G_1$ and $G_2$ should have been allocated during the preprocessing steps, a contradiction.
        \item Case 1.3: $|S_1|=|S_2|=2$. Without loss of generality, assume that $S_1=\{1,2\}$ and $S_2=\{1,3\}$. Since each agent desires an even number of goods, the other two goods are either $\{2\}^y$ and $\{3\}^y$, or $\{1,2\}^y$ and $\{1,3\}^y$. Then, similarly to the cases above, we set $G_1=\{\{1,2\}^x,\{3\}^y\}$ and $G_2=\{\{1,3\}^x,\{2\}^y\}$ in the first case, while in the second case we set $G_1=\{\{1,2\}^x,\{1,3\}^y\}$ and $G_2=\{\{1,3\}^x,\{1,2\}^y\}$. Then $G_1$ and $G_2$ should have been allocated during the preprocessing steps, a contradiction.
    \end{itemize}

    \item Case 2: Good $\{1,2,3\}^{xy}$ is present. Then, by the preprocessing steps, goods $\{1,2\}^{xy}$, $\{1,3\}^{xy}$, and $\{2,3\}^{xy}$ must be absent. Since every agent desires an even number of goods, there is an odd number (one or three) of goods of type $S^x$ as well as an odd number of goods of type $S^y$.
    \begin{itemize}
        \item Case 2.1: There is one good of type $S^x$. Since each agent desires an even number of goods, there is also exactly one good of type $T^y$ with $S\cap T=\emptyset$ and $S\cup T=\{1,2,3\}$. By setting $G_1=\{S^x,T^y\}$ and $G_2=\{\{1,2,3\}^{xy}\}$, $G_1$ and $G_2$ should have already been allocated during the preprocessing steps, a contradiction.
        \item Case 2.2: There are three goods of type $S^x$. These must be either $\{1\}^x,\{2\}^x,\{3\}^x$ or $\{1,2\}^x,\{1,3\}^x,\{2,3\}^x$. Suppose the former.
        Then, since every agent desires an even number of goods, goods $\{1,2\}^y, \{1,3\}^y, \{2,3\}^y$ must also be present. If we set $G_1=\{\{1\}^x,\{2,3\}^y\}$ and $G_2=\{\{1,3\}^y,\{2\}^x\}$, then $G_1$ and $G_2$ should have already been allocated during the preprocessing steps, a contradiction. The case where $\{1,2\}^x,\{1,3\}^x,\{2,3\}^x$ are present can be handled analogously.
    \end{itemize}

    \item Case 3: Good $\{1,2,3\}^{xy}$ is absent, and only one good of type $S^{xy}$ with $|S|=2$ is present. Without loss of generality, assume that this good is $\{1,2\}^{xy}$. Then, by the preprocessing steps, goods $\{1,2\}^{x}$ and $\{1,2\}^{y}$ must be absent. Since every agent desires an even number of goods, there is an odd number (one or three) of goods of type $S^x$ and an odd number of goods of type $S^y$. If there are three goods of type $S^x$, these must be $\{1\}^x,\{2\}^x,\{3\}^x$, but then, for every agent to desire an even number of goods, it has to be that the only additional good is $\{3\}^y$. In this case, however, if we set $G_1=\{\{1,2\}^{xy},\{3\}^x\}$ and $G_2=\{\{1\}^x,\{2\}^x, \{3\}^y\}$, then $G_1$ and $G_2$ should have already been allocated during the preprocessing steps, a contradiction.
        So, there is only one good of type $S^x$ and, similarly, only one good of type $T^y$. Moreover, we must have $1,2 \in S\cup T$ since, again, both agents $1$ and $2$ desire an even number of goods. If we set $G_1=\{S^x,T^y\}$ and $G_2=\{\{1,2\}^{xy}\}$, then $G_1$ and $G_2$ should have already been allocated during the preprocessing steps, a contradiction.

    \item Case 4: Good $\{1,2,3\}^{xy}$ is absent, and exactly two goods $S^{xy}$ with $|S|=2$ are present. Without loss of generality, assume that these are $\{1,2\}^{xy}$ and $\{1,3\}^{xy}$. By the preprocessing steps, goods $\{1,2\}^x$, $\{1,2\}^y$, $\{1,3\}^x$, and $\{1,3\}^y$ must be absent. Since every agent desires an even number of goods, there is an even number (zero or two) goods of type $S^x$ and an even number of goods of type $S^y$.
    \begin{itemize}
        \item Case 4.1: There is no good of type $S^x$ or no good of type $S^y$. Let us assume the former; the latter case can be handled analogously. Then, the goods of type $S^y$ must be $\{2\}^y$ and $\{3\}^y$, so that every agent in $A$ desires two goods. If we set $G_1=\{\{1,2\}^{xy},\{3\}^y\}$ and $G_2=\{\{1,3\}^{xy},\{2\}^y\}$, then $G_1$ and $G_2$ should have already been allocated during the preprocessing steps, a contradiction.
        \item Case 4.2: There are two goods of type $S^x$ and two goods of type $S^y$. If $\{2,3\}^x$ is present, then goods $\{2\}^x$ and $\{3\}^x$ must be absent by the preprocessing steps. However, this means that good $\{1\}^x$ must be present (so that there are two goods of type $S^x$), but there is no way we can pick two goods of type $S^y$ so that all agents desire an even number of goods. Hence, good $\{2,3\}^x$ must be absent and, similarly, $\{2,3\}^y$ must also be absent. Therefore, we must have goods $\{1\}^x$, $\{1\}^y$, $S^x$, and $T^y$, where $S,T \in \{\{2\},\{3\}\}$ and $S\neq T$. If we set $G_1=\{\{1\}^x,\{1\}^y,S^x,T^y\}$ and $G_2=\{\{1,2\}^{xy},\{1,3\}^{xy}\}$, then $G_1$ and $G_2$ should have already been allocated during the preprocessing steps, a contradiction.
    \end{itemize}

    \item Case 5: Good $\{1,2,3\}^{xy}$ is absent, and all three goods $S^{xy}$ with $|S|=2$ are present. By the preprocessing steps, goods $\{1,2\}^{x}$, $\{1,2\}^{y}$, $\{1,3\}^x$, $\{1,3\}^y$, $\{2,3\}^x$, and $\{2,3\}^y$ must be absent. Since every agent desires an even number of goods, there is an odd number (one or three) goods of type $S^x$ and an odd number of goods of type $S^y$. The only possibility that makes each agent of group $A$ desire an even number of goods is that there are three goods of each type: $\{1\}^x,\{2\}^x,\{3\}^x$ and $\{1\}^y,\{2\}^y,\{3\}^y$. If we set $G_1=\{\{1\}^x,\{2\}^x,\{3\}^x,\{1\}^y,\{2\}^y,\{3\}^y\}$ and $G_2=\{\{1,2\}^{xy},\{1,3\}^{xy},\{2,3\}^{xy}\}$, then $G_1$ and $G_2$ should have already been allocated during the preprocessing steps, a contradiction.
\end{itemize}
All cases have been exhausted, and the proof is complete.
\end{proof}

The following results complete the characterization for EF1 (and EFX) by proving that existence is not guaranteed for larger sets.

\begin{proposition}
\label{prop:binary-6-1}
For $(n_1,n_2)=(6,1)$ and binary valuations, an EF1 allocation does not always exist.
\end{proposition}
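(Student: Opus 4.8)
<br />

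The goal is to construct a specific instance with $(n_1,n_2)=(6,1)$ and binary valuations for which no EF1 allocation exists. The plan is to exhibit an explicit set of goods, described (as in the binary setting) by the subsets of agents who desire them, and then argue that every partition of the goods into a bundle for group $A=\{1,\dots,6\}$ and a bundle for the singleton group $B=\{b\}$ fails to be EF1 for at least one agent.

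\medskip

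First I would set up the instance so that the single agent $b$ desires every good, so that $b$'s EF1 constraint forces $B$'s bundle to be, up to one good, no larger than $A$'s bundle; combined with the agents in $A$ each needing their own EF1 guarantee, this creates the tension we want to exploit. The natural candidate is to take the goods corresponding to all ten size-$2$ subsets of $\{1,\dots,5\}$ (or a carefully chosen symmetric family of subsets of $\{1,\dots,6\}$), so that each agent in $A$ desires a fixed, equal number of goods and any ``small'' bundle given to $A$ leaves some agent in $A$ envious even after deleting one good from $B$'s share. The construction should be highly symmetric so that the counting argument is uniform across agents.

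\medskip

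The core of the argument is a double counting / pigeonhole step. On one side, agent $b$'s EF1 condition caps $|B_B|$ relative to $|B_A|$: since $b$ desires all $m$ goods, EF1 for $b$ requires $|B_A|\ge |B_B|-1$, i.e. $|B_A|\ge (m-1)/2$. On the other side, each agent $i\in A$ is EF1 only if her own desired goods are sufficiently represented in $B_A$: removing one good from $B_B$ must bring $b$'s bundle down to at most what agent $i$ gets, which bounds how many of agent $i$'s desired goods can lie in $B_B$. Summing these per-agent constraints over all six agents in $A$ and comparing with the total desire-count (each good contributing its size to the sum) should produce a contradiction with the size bound forced by $b$. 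The key is to choose the multiset of subset-sizes so that the aggregate demand of $A$ cannot be simultaneously satisfied alongside $b$'s balancedness requirement.

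\medskip

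The main obstacle I anticipate is pinning down the exact instance: it must be small enough to check by hand yet rich enough that no clever allocation slips through. Because Theorem~\ref{thm:binary-5-1} shows $(5,1)$ always admits an EF1 allocation, the counterexample must genuinely use the sixth agent, so I expect the construction to make all six agents in $A$ symmetric and to force a parity or counting obstruction that only manifests at $n_1=6$. I would verify minimality by confirming that deleting any one agent from $A$ restores existence (consistent with the $(5,1)$ result), which both validates the example and sharpens intuition for why six is the threshold. Once the instance is fixed, the verification that no EF1 allocation exists reduces to the finite case analysis sketched above, which I would streamline using the symmetry of the construction rather than enumerating all partitions.
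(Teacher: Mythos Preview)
Your high-level strategy---construct a symmetric instance where $b$ desires everything, so that $b$'s EF1 constraint bounds $|B_A|$ from above while the six agents in $A$ collectively demand too many goods in $B_A$---is exactly the paper's approach. But the concrete candidate you propose does not work, and the direction of one inequality is reversed.

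First, the inequality: EF1 for $b$ (who is in group $B$) says $u_b(B_B)\ge u_b(B_A\setminus\{g\})$ for some $g$, i.e.\ $|B_B|\ge |B_A|-1$, which gives $|B_A|\le (m+1)/2$, not $|B_A|\ge (m-1)/2$. You want an \emph{upper} bound on $|B_A|$ to clash with the agents in $A$ needing many goods there.

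Second, and more seriously, the instance with ten goods indexed by the size-$2$ subsets of $\{1,\dots,5\}$ admits an EF1 allocation. Agent $6$ desires nothing and is trivially satisfied; agent $b$ needs $|B_A|\le 5$; each agent $i\in\{1,\dots,5\}$ desires four goods and needs at least two of them in $B_A$. Taking $B_A$ to be the five goods $\{1,2\},\{2,3\},\{3,4\},\{4,5\},\{5,1\}$ (a $5$-cycle) gives every agent in $\{1,\dots,5\}$ exactly two desired goods, so the allocation is EF1 for everyone. The construction fails precisely because it does not use the sixth agent.

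The paper's instance is the ``dual'' of what you are reaching for: instead of goods indexed by pairs of agents, take \emph{agents} indexed by pairs of goods. With only four goods $g_1,\dots,g_4$, let the six agents in $A$ correspond to the $\binom{4}{2}=6$ pairs, each agent desiring exactly her pair, and let $b$ desire all four. Then $b$'s EF1 forces $|B_A|\le 2$, but any two-element $B_A$ misses some pair entirely, leaving that agent with utility $0$ while both her desired goods sit in $B_B$. The whole proof is three lines; no double counting is needed once the instance is this small.
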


\begin{proof}
Suppose that there are four goods. For each pair of goods, there is an agent in the first group who desires both of the goods in the pair but does not desire the remaining two goods. On the other hand, the singleton agent in the second group desires all goods. To guarantee EF1, this agent must receive at least two goods, leaving at most two goods for the first group. However, this leaves some agent in the first group with utility 0, and such an agent does not find the allocation to be EF1.
\end{proof}

\begin{proposition}
\label{prop:binary-4-2}
For $(n_1,n_2)=(4,2)$ and binary valuations, an EF1 allocation does not always exist.
\end{proposition}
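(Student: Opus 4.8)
The plan is to exhibit an explicit instance with $(n_1,n_2)=(4,2)$ and binary valuations for which no allocation is EF1, mirroring the style of Proposition~\ref{prop:binary-6-1}. Since non-existence for $(4,2)$ automatically yields non-existence for all $(n_1',n_2')$ with $n_1'\geq 4$ and $n_2'\geq 2$, a single well-chosen counterexample suffices. The natural approach is to use a small number of goods and design the agents' desire-sets so that any attempt to satisfy the two agents in group $B$ forces an unacceptable bundle on some agent in group $A$, and vice versa.

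First I would fix a concrete set of goods---presumably four goods, by analogy with the $(6,1)$ construction---and specify for each good the subset of the six agents who desire it. Following the notation $S^T$ from Theorem~\ref{thm:binary-3-2}, I would choose the desire-sets so that (a) each of the two agents in $B$ desires enough goods that EF1 requires group $B$ to receive a reasonably large bundle, while (b) the four agents in $A$ have ``spread out'' preferences, so that whenever $B$ is given the bundle it needs, the complementary bundle leaves some agent of $A$ with value $0$ after removing one good, violating EF1 for that agent. The crux of the design is a counting tension: if group $B$'s two agents together desire all or most goods, EF1 for each of them forces $|B_B|$ to be large; but the four agents in $A$ cannot all be kept EF1-happy with the small leftover bundle $B_A$.

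Next I would carry out the case analysis over how the goods are split between the two groups. For each candidate split $(B_A,B_B)$, I would check the EF1 condition for every agent: for an agent in $A$, verify whether $u(B_A)\geq u(B_B\setminus\{g\})$ fails for all $g$; for an agent in $B$, verify whether $u(B_B)\geq u(B_A\setminus\{g\})$ fails for all $g$. Because binary valuations make EF1 equivalent to EFX here, and because the goods are few, this amounts to a short finite enumeration. The goal is to show that every split leaves at least one agent whose envy cannot be eliminated by removing a single good from the envied group's bundle.

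The main obstacle I anticipate is calibrating the desire-sets so that the two countervailing constraints genuinely conflict for \emph{every} split, not merely for the ``symmetric'' ones. With two agents in each relevant role rather than one, there is more flexibility in how a bundle can be partitioned to satisfy both members of $B$ simultaneously, so the instance must be engineered so that no bundle size is simultaneously small enough for $A$'s agents and large enough for both of $B$'s agents. I would likely search among symmetric constructions---e.g., goods corresponding to the pairs or triples of $A$-agents, with both $B$-agents desiring overlapping subsets---to pin down desire-sets that make the parity/counting argument tight, then present the resulting instance together with the verification that each of the finitely many splits fails EF1 for some agent.
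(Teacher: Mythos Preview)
Your plan is sound and matches the paper's approach: the paper also exhibits a four-good instance and shows that every split fails EF1 for some agent. However, your proposal stays at the level of a plan and never actually constructs the instance, and the instance \emph{is} the proof here---the search you describe is the step that has to be completed, not anticipated.

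For comparison, the paper's construction is short enough that no case enumeration is needed. With four goods, the four agents in group $A$ desire the four ``crossing'' pairs $\{g_1,g_3\},\{g_1,g_4\},\{g_2,g_3\},\{g_2,g_4\}$, while the two agents in group $B$ desire $\{g_1,g_2\}$ and $\{g_3,g_4\}$ respectively. Each agent desires exactly two goods, so EF1 forces each to receive at least one desired good. Hence group $B$ must receive some $g_i$ with $i\in\{1,2\}$ and some $g_j$ with $j\in\{3,4\}$; but then the agent in $A$ who desires precisely $\{g_i,g_j\}$ has utility $0$, violating EF1. Your intuition about the ``counting tension'' is exactly right, but the clean way to realize it is to make $B$'s two agents partition the goods into two blocks and let $A$'s agents cover all cross-block pairs; this collapses the case analysis to a single line.
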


\begin{proof}
Suppose that there are four goods. The utilities in the first group are given by $\textbf{u}_{11}=(1,0,1,0)$, $\textbf{u}_{12}=(1,0,0,1)$, $\textbf{u}_{13}=(0,1,1,0)$, and $\textbf{u}_{14}=(0,1,0,1)$, and the utilities in the second group by $\textbf{u}_{21}=(1,1,0,0)$ and $\textbf{u}_{22}=(0,0,1,1)$. In an EF1 allocation, every agent needs at least one desired good. In particular, the second group needs one of the first two goods and one of the last two goods. However, any choice of these goods leaves some agent in the first group with utility 0, meaning that no allocation in EF1.
\end{proof}

In the next proposition, we give a general example for EF$c$ since this will be useful later on.

\begin{proposition}
\label{prop:binary-equal}
Let $c$ be a positive integer. For $(n_1,n_2)=\left(\binom{2c+1}{c+1},\binom{2c+1}{c+1}\right)$ and binary valuations, an EF$c$ allocation does not always exist.
\end{proposition}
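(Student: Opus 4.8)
The plan is to exhibit a single instance on a ground set of $2c+1$ goods that defeats every allocation. Set $G=\{g_1,\dots,g_{2c+1}\}$, and let each of the two groups contain exactly one agent for every $(c+1)$-element subset $S\subseteq G$; there are precisely $\binom{2c+1}{c+1}$ such subsets, which matches $n_1=n_2=\binom{2c+1}{c+1}$. The agent associated with $S$ (in either group) is given the binary valuation that desires exactly the goods in $S$, so that $u(G')=|G'\cap S|$ for every $G'\subseteq G$.

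Next I would translate EF$c$ into a purely combinatorial coverage condition. Fix any allocation, which for two groups is determined by the bundle $B_1$ given to group~$1$, with $B_2=G\setminus B_1$. For binary valuations the cheapest way to delete up to $c$ goods from the other group's bundle is to delete $c$ of the desired ones, so for a group-$1$ agent with desired set $S$, EF$c$ holds if and only if $|B_1\cap S|\ge |B_2\cap S|-c$. Writing $a=|B_1\cap S|$ and using $|B_1\cap S|+|B_2\cap S|=|S|=c+1$, this inequality simplifies to $2a\ge 1$, i.e.\ $a\ge 1$. Hence EF$c$ for the group-$1$ agents requires $|B_1\cap S|\ge 1$ for every $(c+1)$-subset $S$, which is exactly the statement that $B_2$ contains no $(c+1)$-subset, i.e.\ $|B_2|\le c$. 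By the symmetric argument applied to group~$2$, EF$c$ for its agents forces $|B_1|\le c$.

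The contradiction is then immediate: since $|B_1|+|B_2|=2c+1$, at least one of the two bundles has size at least $c+1$, so the bounds $|B_1|\le c$ and $|B_2|\le c$ cannot hold simultaneously. Therefore no allocation can be EF$c$ for every agent, which proves the claim. The only genuine work is the design step, namely recognizing that $\binom{2c+1}{c+1}$ counts the $(c+1)$-subsets of a $(2c+1)$-element ground set and that letting both groups demand coverage of all of these subsets pits the requirement $|B_2|\le c$ directly against $|B_1|\le c$; after that, the $2a\ge 1$ computation and the pigeonhole-style finish are routine.
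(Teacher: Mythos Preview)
Your proof is correct and follows essentially the same construction and argument as the paper: both use $2c+1$ goods with one agent per $(c+1)$-subset in each group, observe that EF$c$ forces each agent to receive at least one desired good, and conclude via the pigeonhole observation that some group gets at most $c$ goods and hence misses some $(c+1)$-subset entirely. Your $2a\ge 1$ derivation simply spells out in detail what the paper states in one line (``Every agent must get at least one desired good in an EF$c$ allocation'').
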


\begin{proof}
Suppose that there are $2c+1$ goods. For each subset of $c+1$ goods, there is exactly one agent in each group who desires the goods in this subset and nothing else. Every agent must get at least one desired good in an EF$c$ allocation. However, in any allocation one of the groups receives at most $c$ goods. In that group, at least one of the agents does not get any desired good. Hence no allocation can be EF$c$.
\end{proof}

Taking $c=1$ in Proposition~\ref{prop:binary-equal} yields the following:

\begin{corollary}
\label{cor:binary-3-3}
For $(n_1,n_2)=(3,3)$ and binary valuations, an EF1 allocation does not always exist.
\end{corollary}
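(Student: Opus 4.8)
The plan is to obtain the claim as the $c=1$ specialization of Proposition~\ref{prop:binary-equal}, so the only real work is to check that the parameters collapse correctly and that the counting argument survives at $c=1$. Setting $c=1$ gives $\binom{2c+1}{c+1}=\binom{3}{2}=3$, so both groups have exactly three agents, and the number of goods is $2c+1=3$. Thus the instance promised by the proposition is: three goods $g_1,g_2,g_3$, and in each of the two groups three agents whose binary valuations are the indicator vectors of the three two-element subsets of $\{g_1,g_2,g_3\}$, namely $\{g_1,g_2\}$, $\{g_1,g_3\}$, and $\{g_2,g_3\}$.

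First I would record this explicit instance, then invoke the argument of Proposition~\ref{prop:binary-equal} verbatim: in any allocation of the three goods to the two groups, at least one group receives at most one good. Fixing such a group, say one that receives the single good $g_k$ (or no good at all), that group contains the agent whose desired pair is $\{g_1,g_2,g_3\}\setminus\{g_k\}$; this agent derives utility $0$ from her own group's bundle, while the other group holds both goods she desires. Removing any one good from the other group's bundle still leaves one of her desired goods, so she does not find the allocation EF1. Hence no EF1 allocation exists.

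Since there is no step harder than substituting $c=1$, there is no genuine obstacle; the only point worth verifying is that for $c=1$ the ``one group gets at most $c$ goods'' pigeonhole still forces a wholly undesired bundle on some agent, which it does because each agent desires exactly $c+1=2$ goods while the relevant group holds at most $c=1$ of them. I would therefore keep the proof to a single sentence citing Proposition~\ref{prop:binary-equal}, optionally appending the explicit three-good instance above for the reader's convenience.
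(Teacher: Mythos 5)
Your proposal is correct and is exactly the paper's proof: the paper obtains this corollary by taking $c=1$ in Proposition~\ref{prop:binary-equal}, which is precisely your specialization (three goods, each group's three agents desiring the three two-element subsets, pigeonhole forcing some group to hold at most one good). Your unpacking of the argument at $c=1$ is accurate, so nothing further is needed.
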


Before addressing EFX$_0$, we show that for two groups of arbitrary sizes, determining whether an EF1 (and EFX) allocation exists is computationally hard. Our reduction is similar to the one used by \citet{SegalhaleviSu18} to show the hardness of deciding the existence of an allocation that gives every agent a positive utility.

\begin{proposition}
\label{prop:binary-hard}
For two groups of agents with binary valuations, it is NP-complete to decide whether there exists an EF1 allocation.
\end{proposition}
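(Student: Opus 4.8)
The plan is to establish membership in NP and then prove NP-hardness via a reduction from \emph{Monotone 3-SAT}, the variant of satisfiability in which every clause consists of exactly three literals that are either all positive or all negative; this problem is known to be NP-complete. Membership in NP is immediate: an allocation $(B_1,B_2)$ is a polynomial-size certificate, and since valuations are binary we can check EF1 for each agent by comparing the number of her desired goods in her own group's bundle with the number in the other group's bundle. Indeed, the allocation is EF1 for an agent with value $u$ exactly when $u(B_{\text{own}})\geq u(B_{\text{other}})-1$, which is checkable in polynomial time.

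The crux of the reduction is the following observation about binary valuations in the two-group setting. If an agent desires a set $S$ of goods, her EF1 condition reads $|S\cap B_{\text{own}}|\geq |S\cap B_{\text{other}}|-1$, equivalently $|S\cap B_{\text{own}}|\geq (|S|-1)/2$. When $|S|=3$ this simplifies to $|S\cap B_{\text{own}}|\geq 1$: the agent finds the allocation EF1 if and only if at least one of her three desired goods lies in her own group's bundle. I would exploit this equivalence to encode clauses of length three. Given a Monotone 3-SAT instance with variables $v_1,\dots,v_\ell$ and clauses $C_1,\dots,C_r$, I create one good $g_i$ per variable $v_i$, identifying ``$g_i$ allocated to group $1$'' with ``$v_i$ set to true.'' For each all-positive clause I place an agent in group $1$ who desires exactly the three goods corresponding to the clause's variables, and for each all-negative clause I place an agent in group $2$ desiring the three corresponding goods.

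Correctness then follows from the size-$3$ equivalence applied in both directions. An allocation corresponds to a truth assignment via the identification above. A group-$1$ agent encoding a positive clause is EF1 precisely when at least one of her goods is in $B_1$, i.e.\ at least one of the clause's variables is true; a group-$2$ agent encoding a negative clause is EF1 precisely when at least one of her goods is in $B_2$, i.e.\ at least one of the clause's variables is false. Hence the allocation is EF1 for every agent if and only if every clause is satisfied, so an EF1 allocation exists if and only if the formula is satisfiable. Since binary EF1 and EFX coincide, the same reduction establishes hardness for EFX.

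The main obstacle, and the reason the clause length must be exactly three, is the threshold $\lceil(|S|-1)/2\rceil$ inherent in the EF1 condition. For $|S|=2$ the induced constraints are $2$-clauses and the resulting satisfiability problem is polynomially solvable, while for $|S|\geq 4$ the threshold exceeds one, so the EF1 condition no longer expresses a simple disjunction but rather a majority-type requirement over the desired goods. Length three is the unique point where the ``envy up to one good'' slack cancels exactly one unit of the majority threshold, turning each agent's constraint into an ordinary clause; verifying this arithmetic and confirming that Monotone 3-SAT (rather than its polynomial $2$-literal restriction) is the correct source problem is the technical heart of the argument.
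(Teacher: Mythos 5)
Your proof is correct and takes essentially the same route as the paper: NP membership by direct verification, then a reduction from Monotone 3-SAT with one good per variable, one group-1 agent per all-positive clause and one group-2 agent per all-negative clause, using the equivalence that an agent desiring exactly three goods is EF1 if and only if at least one of them lies in her own group's bundle. The one detail you should add (which the paper handles explicitly by citing the NP-completeness of the restricted version) is that the three literals in each clause must correspond to \emph{distinct} variables: a degenerate clause such as $x\vee x\vee x$ would yield an agent desiring a single good, and such an agent is EF1 under every allocation, so the reduction would break without this assumption.
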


\begin{proof}
The problem is in NP, since we can clearly verify in polynomial time whether an allocation is EF1 for every agent. To show hardness, we use a reduction from the {\sc Monotone 3-SAT} problem, which asks whether a Boolean formula consisting of clauses with either three positive or three negative literals can be satisfied. In fact, we will use the slightly stronger version where the three literals in each clause are distinct; this version is still NP-complete \citep{Li97}.

Given an instance of {\sc Monotone 3-SAT} consisting of a formula $\phi$, we construct an instance of our problem with two groups of agents $A_1$ and $A_2$ as follows:
\begin{itemize}
\item For each variable, there is a corresponding good.
\item For each clause with three positive literals, we create an agent in $A_1$ who desires exactly the three goods corresponding to the variables contained in the clause.
\item For each clause with three negative literals, we create an agent in $A_2$ who desires exactly the three goods corresponding to the variables contained in the clause.
\end{itemize}

Note that every agent needs at least one desired good in order to be EF1. Any assignment that satisfies $\phi$ defines an allocation where the goods corresponding to true variables are allocated to group $A_1$, while those corresponding to false variables are allocated to group $A_2$. Since all clauses are satisfied, every agent receives utility at least 1 and is therefore EF1. Similarly, any EF1 allocation gives rise to a satisfying assignment for $\phi$, completing the reduction.
\end{proof}

We now turn to the stronger notion of EFX$_0$, and show a negative result.

\begin{proposition}
\label{prop:binary-2-1}
For $(n_1,n_2)=(2,1)$ and binary valuations, an EFX$_0$ allocation does not always exist.
\end{proposition}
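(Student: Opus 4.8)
The plan is to exhibit a single instance and show that no allocation can be EFX$_0$ for all three agents. I would take six goods $g_1,\dots,g_6$ and let the two agents of the first group have the \emph{disjoint} binary valuations in which agent $1$ desires exactly $\{g_1,g_2,g_3\}$ and agent $2$ desires exactly $\{g_4,g_5,g_6\}$, while the single agent $b$ of the second group desires all six goods. Write $B_1$ for the bundle given to the first group and $B_2=G\setminus B_1$ for the bundle given to $b$, and set $a:=|B_1\cap\{g_1,g_2,g_3\}|=u_1(B_1)$ and $c:=|B_1\cap\{g_4,g_5,g_6\}|=u_2(B_1)$, so that $|B_1|=a+c$.

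First I would use agent $b$ to bound the size of $B_1$. Since $b$ values every good we have $u_b(S)=|S|$, and removing any single good from $B_1$ leaves value $|B_1|-1$; hence EFX$_0$ for $b$ requires $|B_2|\ge |B_1|-1$, i.e. $|B_1|\le 3$. In particular every allocation with $|B_1|\ge 4$ already fails EFX$_0$ for $b$, so it remains to rule out all allocations with $a+c\le 3$.

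Next I would convert the requirement of each first-group agent into a lower bound. For agent $1$, whenever $B_2$ contains a good she does not desire --- which happens exactly when $B_1\not\supseteq\{g_4,g_5,g_6\}$, i.e. $c<3$ --- the ``up to any good'' clause lets her delete that undesired good and therefore forces full envy-freeness $u_1(B_1)\ge u_1(B_2)$, that is $a\ge 2$ (since $a+u_1(B_2)=3$). The only other possibility consistent with $a+c\le3$ is $c=3$, which forces $B_1=\{g_4,g_5,g_6\}$ and $a=0$; here $B_2=\{g_1,g_2,g_3\}$ consists entirely of goods she values, and even the relaxed condition $u_1(B_1)\ge u_1(B_2)-1$ reads $0\ge 3-1$, which fails. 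Either way, $a\le 1$ makes agent $1$ not EFX$_0$, and by the symmetric argument $c\le 1$ makes agent $2$ not EFX$_0$.

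Finally I would combine the three bounds: any EFX$_0$ allocation needs $a\ge 2$ and $c\ge 2$, whence $|B_1|=a+c\ge 4$, contradicting $|B_1|\le 3$ from agent $b$. The delicate step --- and the one I expect to require the most care --- is precisely this ``up to any good'' relaxation for the first-group agents: because EFX$_0$ permits deleting an arbitrary (even undesired) good, the conclusion $a\ge 2$ is only valid when $B_2$ genuinely contains an undesired good, so the degenerate case in which the first group receives an entire desired triple must be treated separately. This is also why the construction cannot be shrunk: a group bundle of a single good, or an agent desiring at most one good, is automatically EFX$_0$ with respect to the other side, so one really needs the two desired sets to be disjoint and of odd size at least three, producing the parity gap $\lceil 3/2\rceil+\lceil 3/2\rceil>\lceil 6/2\rceil$ that drives the contradiction.
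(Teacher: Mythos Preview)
Your proposal is correct and uses exactly the same instance as the paper: six goods, the two first-group agents desiring disjoint triples $\{g_1,g_2,g_3\}$ and $\{g_4,g_5,g_6\}$, and the singleton agent desiring all six. Your argument is a slightly more systematic parameterization (via $a$ and $c$) of the paper's case analysis, but the logic is identical: the singleton forces $|B_1|\le 3$, each first-group agent forces at least two of her desired goods into $B_1$, and these bounds are incompatible.
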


\begin{proof}
Suppose that there are six goods. The utilities are given by $\textbf{u}_{11}=(1,1,1,0,0,0)$, $\textbf{u}_{12}=(0,0,0,1,1,1)$, and $\textbf{u}_{21}=(1,1,1,1,1,1)$. To guarantee EFX$_0$, the singleton agent in the second group must receive at least three goods, leaving at most three goods for the first group. This means that at least one of the agents in the first group, say $a_{11}$, receives at most one desired good. If $a_{11}$ does not receive any desired good, the allocation is clearly not EFX$_0$ for her. Else, she receives exactly one desired good. In this case, she has utility 2 for the second group's bundle, and at least one of the goods in that bundle yields utility 0 to her. Therefore the allocation cannot be EFX$_0$.
\end{proof}

In contrast, \citet{PlautRo18} showed that an EFX$_0$ allocation always exists if $(n_1,n_2)=(1,1)$, even for arbitrary monotonic valuations. Combined with Proposition~\ref{prop:binary-2-1}, this yields a complete characterization of EFX$_0$ for every class of valuations between binary and monotonic.

\section{Fixed Groups with General Valuations}
\label{sec:fixed-general}

In this section, we again assume that the partition of the agents into two groups is predetermined, but allow them to have more general valuations.

We first show that the existence of EF1 allocations is guaranteed for $(n_1,n_2)=(2,1)$. The proof relies on the following lemma which may be of independent interest. A partition of the goods in $G$ into two bundles is said to be \emph{exact up to one good (Exact1)} for an agent if the agent views each bundle to be EF1. As with allocations, we call a partition of the goods \emph{balanced} if the sizes of the two bundles differ by at most one.
The proof of the lemma involves constructing a bipartite graph with the goods as vertices, considering a partition of the vertices into two disjoint independent sets, and showing that this partition is necessarily Exact1.

\begin{lemma}
\label{lem:Exact1}
For two agents with responsive valuations, there always exists a balanced partition of $G$ into two bundles that is Exact1 for both agents.
\end{lemma}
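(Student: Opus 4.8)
The plan is to reduce the lemma to a graph-coloring argument resting on a single structural fact about responsive valuations. First I would record a \emph{domination lemma}: if $X$ and $Y$ are disjoint bundles whose elements, listed in decreasing order of value, satisfy $|X|\ge|Y|$ and the $k$-th most valuable good of $X$ is worth at least the $k$-th most valuable good of $Y$ for every $k\le|Y|$, then $u(X)\ge u(Y)$. This follows by repeatedly invoking responsiveness to upgrade each good of $Y$ to the corresponding good of $X$ (the bundles are disjoint, so all goods involved are distinct), and then invoking monotonicity to add the surplus goods of $X$.

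Using this lemma I would first analyze a single agent. Sort the agent's goods in decreasing order of value and group them into consecutive pairs $\{g_{(1)},g_{(2)}\},\{g_{(3)},g_{(4)}\},\dots$, with the least-valued good left over when $m$ is odd. I claim that \emph{any} partition placing exactly one good of each pair into each bundle is Exact1 for the agent. To see that the first bundle is EF1 relative to the second, delete the second bundle's good from the top pair; the domination lemma then applies, since after this deletion the $k$-th largest value remaining in the second bundle comes from pair $k+1$ and is dominated by the $k$-th largest value of the first bundle, which comes from pair $k$. The symmetric deletion handles the other direction. The key point is that this requirement is imposed purely pair-by-pair, so it can be met for two agents at once.

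This is where the bipartite graph enters. For each agent $i$ I would form the (near-)perfect matching $M_i$ on the goods whose edges are that agent's consecutive pairs, and consider $H=M_1\cup M_2$. Every vertex of $H$ has degree at most two and every cycle alternates between the two matchings, so $H$ decomposes into paths and even cycles and is bipartite. A proper $2$-coloring of $H$ is precisely a partition of $G$ splitting every pair of both agents, and by the previous paragraph such a partition is Exact1 for each of them. Balancedness comes for free: each matching leaves at most one good uncovered and every matching edge is bichromatic, so one color class receives exactly one good from each of the $\lfloor m/2\rfloor$ pairs.

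The step I expect to require the most care is the odd case, where an agent's smallest good is unpaired and is appended to one bundle, making it the larger one. I would verify separately that appending this least-valued good preserves Exact1: the bundle that gained it only becomes more attractive, while in the reverse direction the domination lemma still applies because the appended good sits at the very bottom of the value order. I would also confirm that balancedness is unaffected, which holds because each matching still covers all but one good, forcing the two bundle sizes to differ by at most one in any proper $2$-coloring.
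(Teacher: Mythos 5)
Your proposal is correct and takes essentially the same route as the paper: both constructions pair consecutive goods in each agent's sorted order, take the union of the two resulting matchings on the goods, observe that this graph is bipartite, and deduce the Exact1 property from responsiveness via exactly the pairwise-domination argument you isolate as your ``domination lemma.'' The only difference is cosmetic: the paper handles odd $m$ by adding a dummy good of zero marginal value and reducing to the even case, whereas you treat it directly with near-perfect matchings---both variants work.
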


\begin{proof}
Suppose first that the number of goods $m$ is even, say $m=2t$. Assume without loss of generality that the valuation $u_1$ of the first agent is such that $u_1(g_1)\geq u_1(g_2)\geq\dots\geq u_1(g_{2t})$. Construct an undirected graph $H$ with $2t$ vertices corresponding to the goods, and add $t$ red edges $(g_1,g_2),(g_3,g_4),\dots,(g_{2t-1},g_{2t})$. Similarly, add $t$ blue edges according to the valuation $u_2$ of the second agent. Since no two edges of the same color are adjacent, the graph cannot contain an odd cycle, which means that $H$ is bipartite. Therefore, its vertices can be partitioned into disjoint independent sets $V_1$ and $V_2$. If $|V_1|\geq t+1$, there is an edge among the vertices in $V_1$, a contradiction. An analogous statement holds for $V_2$. It follows that $|V_1|=|V_2|=t$.

The partition $(V_1,V_2)$ is balanced; it remains to show that it is Exact1 for both agents. By symmetry, it suffices to prove this for the first agent. By construction, each of $V_1$ and $V_2$ contains exactly one good from each of the pairs $(g_1,g_2),(g_3,g_4),\dots,(g_{2t-1},g_{2t})$. For $i=1,2,\dots,t-1$, the $i$th best good in $V_1$ according to the first agent's valuation is no worse than the $(i+1)$st best good in $V_2$. Responsiveness then implies that the agent values $V_1$ at least as much as $V_2$ when the best good in $V_2$ is removed. This means that she regards $V_1$ to be EF1. Similarly, she regards $V_2$ to be EF1; hence the partition is Exact1 for her.

Suppose now that $m=2t-1$ is odd. We add a dummy good $g_{2t}$ such that $u_i(G'\cup\{g_{2t}\}) = u_i(G')$ for $i=1,2$ and any $G'\subseteq G$. We then repeat the same procedure as in the case where $m$ is even (placing $g_{2t}$ at the end of each agent's ranking of single goods), and remove $g_{2t}$ from the resulting partition. The final partition is balanced, and a similar proof as before shows that it is Exact1 for both agents.
\end{proof}

Note that the lemma no longer holds if we move to three agents (while still keeping the partition into two bundles), even for binary valuations. Indeed, if there are three goods and each of the three agents desires a distinct subset of two goods, then no partition is Exact1 for all three agents.
Recently, \citet{GoldbergHoIg20} showed that an Exact1 partition always exists even for arbitrary monotonic valuations, although the partition that their algorithm returns may not be balanced.

Lemma~\ref{lem:Exact1} yields the following EF1 existence result.

\begin{theorem}
\label{thm:responsive-2-1}
For $(n_1,n_2)=(2,1)$ and responsive valuations, a balanced EF1 allocation always exists.
\end{theorem}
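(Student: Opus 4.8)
The plan is to deduce Theorem~\ref{thm:responsive-2-1} directly from Lemma~\ref{lem:Exact1}. We have two groups, where group~1 consists of two agents $a_{11}, a_{12}$ and group~2 is the singleton $\{a_{21}\}$. The key observation is that the singleton agent $a_{21}$ plays a symmetric role to the two agents in group~1 as far as the partition structure is concerned: an EF1 allocation for all three agents is precisely a partition $(B_1, B_2)$ of $G$ such that $a_{21}$ regards $B_2$ as EF1, while \emph{both} $a_{11}$ and $a_{12}$ regard $B_1$ as EF1. So the task reduces to finding a single balanced partition of $G$ that is simultaneously EF1 from the right side for three responsive agents, where $a_{21}$ must approve one designated bundle and the group-1 agents must approve the other.

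First I would note that Lemma~\ref{lem:Exact1} gives, for any \emph{two} responsive agents, a balanced partition that is Exact1 for both---meaning each of the two agents regards each of the two bundles as EF1. The natural attempt is to apply the lemma to the pair consisting of $a_{21}$ and \emph{one} of the group-1 agents, but this ignores the second group-1 agent entirely, so that does not immediately work. The cleaner route is to apply Lemma~\ref{lem:Exact1} to the two agents of group~1, namely $a_{11}$ and $a_{12}$, obtaining a balanced partition $(V_1, V_2)$ that is Exact1 for both of them. This means each of $a_{11}, a_{12}$ regards \emph{both} $V_1$ and $V_2$ as EF1, so whichever of the two bundles we hand to group~1, both of its members are satisfied. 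It then remains only to orient the partition so that the singleton agent $a_{21}$ is EF1 with respect to the bundle assigned to her.

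The second step handles $a_{21}$. Given the balanced partition $(V_1, V_2)$ from the lemma, the singleton agent compares the two bundles under $u_{21}$; assign to group~2 whichever bundle $a_{21}$ values \emph{weakly more}, say $V_2$, and assign $V_1$ to group~1. Since $u_{21}(V_2) \geq u_{21}(V_1) \geq u_{21}(V_1 \setminus \{g\})$ for any good $g$ by monotonicity, agent $a_{21}$ is in fact envy-free (hence EF1) toward group~1's bundle $V_1$. Meanwhile, because the partition is Exact1 for $a_{11}$ and $a_{12}$, both regard $V_1$ as EF1 against $V_2$. Thus the allocation $(B_1, B_2) = (V_1, V_2)$ is EF1 for every agent, and it is balanced because $(V_1, V_2)$ was. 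Symmetrically, if $a_{21}$ prefers $V_1$, swap the roles.

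I do not expect a genuine obstacle here, as the heavy lifting is entirely contained in Lemma~\ref{lem:Exact1}; the work of this theorem is purely the combinatorial argument that an Exact1 partition for the two members of the majority group, oriented favorably for the singleton, yields EF1 for all three. The one point requiring a little care is the direction of the EF1 comparisons: Exact1 guarantees that $a_{11}$ and $a_{12}$ find their \emph{own} bundle $V_1$ to be EF1 relative to the other group's bundle $V_2$, which is exactly the inequality needed, and the singleton's satisfaction follows from plain monotonicity once we give her the weakly preferred bundle. Establishing that the favorable orientation for the singleton never conflicts with the Exact1 guarantee for the pair---which holds because Exact1 makes \emph{both} orientations acceptable to the pair---is the crux, but it is immediate from the definition of Exact1.
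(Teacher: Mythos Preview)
Your proof is correct and follows essentially the same approach as the paper: apply Lemma~\ref{lem:Exact1} to two of the three agents, then let the third agent pick her group's preferred bundle. The only cosmetic difference is that the paper phrases it as ``choose two arbitrary agents'' (so, e.g., applying the lemma to $a_{21}$ and $a_{11}$ and letting $a_{12}$ choose would also work, contrary to your aside), whereas you commit to the two group-1 agents; either instantiation yields the result immediately.
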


\begin{proof}
Choose two arbitrary agents and consider a balanced partition of $G$ into two bundles that is Exact1 for both agents; such a partition exists by Lemma~\ref{lem:Exact1}. Let the remaining agent choose for her group the bundle that she prefers, and allocate the other bundle to the other group. It is clear that the resulting allocation is balanced and EF1.
\end{proof}

In light of Theorem~\ref{thm:responsive-2-1} and our characterization for binary valuations in Section~\ref{sec:fixed-binary}, it is natural to ask whether EF1 can also be guaranteed for larger groups with additive valuations and beyond. While we were unable to settle this question, we show that the existence of EF1 allocations would be guaranteed for almost all of the remaining cases provided that a graph-theoretic conjecture of \citet{JafariAl17} is true. To describe the conjecture and its implications in our setting, we need to introduce a class of graphs called \emph{generalized Kneser graphs}.\footnote{Kneser graphs have previously been used in the context of fair division and resource allocation by \citet{PlautRo18} and \citet{ManurangsiSu19}.} 

\begin{definition}
Let $b\geq r\geq s$ be positive integers and consider an underlying set of elements $\mathcal{U}$ such that $|\mathcal{U}|=b$. The \emph{generalized Kneser graph $K(b,r,s)$} is an undirected graph with all $r$-element subsets of $\mathcal{U}$ as its vertices.
Two vertices are connected by an edge if and only if the corresponding subsets intersect in at most $s-1$ elements.
\end{definition}

Recall that the \emph{chromatic number} of a graph $H$, denoted by $\chi(H)$, is the minimum number of colors needed to color the vertices of $H$ so that any two adjacent vertices have different colors. For example, $K(4,2,2)$ is a clique of size 6, so its chromatic number is 6.

We are now ready to establish the connection between the generalized Kneser graph and our fair division problem.
In the following proof, we let the vertices of the graph represent the balanced allocations, and have each agent color all allocations that she does not regard as EF1. 
We then show that some vertex must be left uncolored, which implies that the corresponding allocation is EF1 for all agents.

\begin{theorem}
\label{thm:monotonic-EF1}
Let $t\geq 2$ be an integer. For two groups with at most $\chi(K(2t,t,2))-1$ agents in total and arbitrary monotonic valuations, and $2t-1$ or $2t$ goods, a balanced EF1 allocation always exists.
\end{theorem}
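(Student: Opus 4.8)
The plan is to use a coloring argument on the generalized Kneser graph $K(2t,t,2)$, exactly as the paragraph preceding the statement foreshadows. The vertices of this graph are the $t$-element subsets of a $2t$-element universe, which I will identify with the balanced allocations of the goods: when there are $2t$ goods, a balanced allocation assigns exactly $t$ goods to group~$1$ (and the complementary $t$ goods to group~$2$), so the $t$-subsets of $G$ are in bijection with balanced allocations. Each of the $n \le \chi(K(2t,t,2))-1$ agents will color every vertex (allocation) that she does \emph{not} regard as EF1. If I can show that each agent colors an \emph{independent set} of the graph, then the union of their colorings uses at most $n \le \chi-1$ colors; since $\chi$ colors are needed to properly color the whole graph, at least one vertex must be left uncolored, and that allocation is EF1 for every agent simultaneously.

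\textbf{The heart of the argument} is therefore to verify that the set of allocations an agent dislikes forms an independent set, i.e.\ that any two allocations she dislikes are \emph{non-adjacent}, meaning their corresponding $t$-subsets intersect in at least $s=2$ elements. So I must show: if an agent $a$ finds both allocation $V$ and allocation $W$ (each a $t$-subset of goods, giving group~$1$ that subset) to be non-EF1, then $|V \cap W| \ge 2$. Equivalently, by contrapositive, if $|V \cap W| \le 1$ then $a$ regards at least one of $V$, $W$ as EF1. The key observation is that an agent finds a bundle non-EF1 precisely when she strictly prefers the \emph{other} group's bundle even after removing her favorite good from it. Suppose $V$ and $W$ share at most one good. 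Consider the bundle $a$ actually gets under each allocation: the relevant comparison is between the two complementary halves. I would argue that $V$ and $\overline{W} = G \setminus W$ overlap substantially (since $V$ and $W$ barely overlap, $V$ is largely contained in the complement of $W$), so one of the two allocations gives $a$ a bundle that dominates, up to one good, the bundle she would get under the other; monotonicity then forces EF1 for at least one of them. Making this precise—tracking which bundle goes to $a$'s group and invoking only monotonicity—is \textbf{the step I expect to be the main obstacle}, since it must work for arbitrary monotonic (not merely additive or responsive) valuations, so I cannot use any value-additivity and must rely purely on set-inclusion plus monotonicity to remove a single good.

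\textbf{To handle the odd case} $m = 2t-1$, I would reduce to the even case by the same dummy-good device used in Lemma~\ref{lem:Exact1}: add a good $g_{2t}$ valued at $0$ by every agent (formally, $u(G' \cup \{g_{2t}\}) = u(G')$ for all agents and all $G'$), run the $2t$-goods argument to obtain a balanced EF1 allocation of the augmented instance, and then delete $g_{2t}$. Since the dummy good carries no value for anyone, removing it preserves EF1 for every agent, and balancedness of the augmented allocation (bundle sizes $t$ and $t$) yields bundle sizes $t$ and $t-1$ after deletion, which is balanced for $2t-1$ goods. This reduction is routine once the even case is established.

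\textbf{Finally, on the count of colors and groups:} I must be careful that the coloring uses strictly fewer than $\chi(K(2t,t,2))$ colors. With $n$ agents total across the two groups, assigning one color per agent gives at most $n$ colors, and the hypothesis $n \le \chi(K(2t,t,2)) - 1$ guarantees a properly-uncolorable situation, i.e.\ an uncolored vertex must remain. I would close by noting that the uncolored vertex corresponds to a balanced allocation that is EF1 for all agents, which completes the proof.
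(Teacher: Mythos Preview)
Your proposal is correct and follows essentially the same approach as the paper: identify vertices of $K(2t,t,2)$ with balanced allocations, have each agent color the allocations she does not find EF1, argue that each agent's colored set is independent, and conclude via the chromatic number that some vertex remains uncolored; the odd case is handled by the same dummy-good reduction you describe. For the step you flagged as the main obstacle, the paper's argument is short: if $|G_1\cap G_2|=0$ the two bundles are complementary so one is envy-free, while if $G_1\cap G_2=\{g\}$ and $g'$ is the unique good outside $G_1\cup G_2$, then non-EF1 at $G_1$ gives $u(G_1)<u((G\setminus G_1)\setminus\{g'\})=u(G_2\setminus\{g\})$ and symmetrically $u(G_2)<u(G_1\setminus\{g\})$, whence monotonicity yields the contradiction $u(G_1)<u(G_2\setminus\{g\})\le u(G_2)<u(G_1\setminus\{g\})\le u(G_1)$; the group-2 case follows because two $t$-subsets of a $2t$-set intersect in at most one element if and only if their complements do.
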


\begin{proof}
Suppose first that the number of goods is $m=2t$.  Consider the graph $K(2t,t,2)$ with the vertices corresponding to all balanced allocations, where we identify each vertex by the set of goods allocated to the first group.

Give each agent a distinct color, and let her color all allocations that she \emph{does not} regard as EF1. We claim that no agent can color two adjacent vertices. Consider an agent in the first group with valuation $u$, and suppose for contradiction that she colors two adjacent vertices corresponding to the sets $G_1$ and $G_2$. Since the two vertices are adjacent, $|G_1\cap G_2|\leq 1$.
If $G_1\cap G_2=\emptyset$, since $|G_1|=|G_2|=t$ it holds that $G_2=G\backslash G_1$. So the agent should consider one of the two allocations as EF, which is a contradiction to her coloring both vertices.
Therefore $|G_1\cap G_2|=1$. Let $g$ be the common good of $G_1$ and $G_2$. Since the agent does not view $G_1$ to be EF1, we have $u(G_1)<u(G_2\backslash\{g\})=u((G\backslash G_1)\backslash \{g'\})$, where $g'$ is the unique good that does not belong to $G_1\cup G_2$. Similarly, since the agent does not view $G_2$ to be EF1, we have $u(G_2)<u(G_1\backslash\{g\})$. Monotonicity then implies that $$u(G_1)<u(G_2\backslash\{g\})\leq u(G_2)<u(G_1\backslash\{g\})\leq u(G_1),$$ a contradiction. The claim can be proven similarly for agents in the second group by observing that for any two balanced allocations, the bundles allocated to the first group intersect in at most one good if and only if the same condition holds for the bundles allocated to the second group.

Since there are at most $\chi(K(2t,t,2))-1$ agents, the number of colors is at most $\chi(K(2t,t,2))-1$. Hence there is a vertex that does not receive any color. By definition, this vertex corresponds to a balanced allocation that is EF1 for all agents. This completes the proof for the case where $m=2t$.

Suppose now that $m=2t-1$. We add a dummy good $g_{2t}$ such that $u_{ij}(G'\cup\{g_{2t}\}) = u_{ij}(G')$ for all agents $a_{ij}$ and all $G'\subseteq G$. We then repeat the same procedure as in the case where $m=2t$ and remove $g_{2t}$ from the resulting allocation. The final allocation is balanced, and a similar proof as before shows that it is EF1.
\end{proof}

The bound $\chi(K(2t,t,2))-1$ in Theorem~\ref{thm:monotonic-EF1} is also tight: for two groups with $y:=\chi(K(2t,t,2))$ agents in total and $2t$ goods, there does not always exist a balanced EF1 allocation.
This can be seen as follows.
Consider a coloring of $K(2t,t,2)$ using $y$ colors such that no two adjacent vertices have the same color---such a coloring exists due to the definition of $y$.
Distribute the $y$ agents arbitrarily between the two groups, and assign to each agent a distinct color.
It suffices to show that for each color, there exists a monotonic valuation for the corresponding agent such that the allocations of this color are precisely those that the agent does not regard as EF1.
Focus on a particular agent with valuation $u$ and her corresponding color, and identify each vertex of the graph by the bundle allocated to the agent's group.
As noted in the proof of Theorem~\ref{thm:monotonic-EF1}, for any two balanced allocations, the bundles allocated to the first group intersect in at most one good exactly when the same holds for those allocated to the second group, so it suffices to consider the case where the agent belongs to the first group.
For each colored bundle $G'$, we set $u(G'')=0$ for every $G''\subseteq G'$.
For all remaining bundles $G''$, we set $u(G'')=1$.
Clearly $u$ is monotonic, and all uncolored bundles (of size $t$) have value $1$, which means that they are EF1 for the agent.
It remains to show that all colored bundles fail EF1.
Let $G_1$ be a colored bundle, so $u(G_1)=0$.
Since no two adjacent vertices are both colored, every bundle $G_2$ of size $t$ which has an overlap of at most one good with $G_1$ is not colored, so $u(G_2)=1$ for such $G_2$.
By definition of $u$, this means that for every bundle $G_3$ of size $t-1$ disjoint from $G_1$, we also have $u(G_3)=1$.
It follows that $G_1$ is not EF1 for the agent, completing the proof of tightness.

We remark that our proof of Theorem~\ref{thm:monotonic-EF1} does not imply a polynomial-time algorithm for computing a balanced EF1 allocation.
Indeed, as problems involving Kneser graphs tend to be computationally hard (see, e.g., \citep{PlautRo18}), we do not expect that our approach will lead to an efficient algorithm.
Finding an alternative approach that lends itself to efficient algorithms is an interesting open direction.

We now state the existence of balanced EF1 allocations for any number of goods.

\begin{corollary}
\label{cor:monotonic-EF1}
Let $z:=\min_{r\geq 2}\chi(K(2r,r,2))$. For two groups with at most $z-1$ agents in total and arbitrary monotonic valuations, a balanced EF1 allocation always exists.
\end{corollary}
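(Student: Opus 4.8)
Looking at Corollary~\ref{cor:monotonic-EF1}, I need to prove that for two groups with at most $z-1$ agents in total (where $z := \min_{r\geq 2}\chi(K(2r,r,2))$) and arbitrary monotonic valuations, a balanced EF1 allocation always exists—crucially, \emph{for any number of goods}, not just $2t-1$ or $2t$ goods as in Theorem~\ref{thm:monotonic-EF1}.

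My plan is to reduce the general case to Theorem~\ref{thm:monotonic-EF1} by choosing the parameter $t$ appropriately based on the number of goods $m$, and then exploiting the dummy-good padding trick that already appears in the proof of that theorem. The key observation is that $z$ is defined as a minimum over all $r\geq 2$, so there exists some specific integer $r^*\geq 2$ achieving $\chi(K(2r^*,r^*,2)) = z$. However, $r^*$ is a fixed constant independent of $m$, so I cannot directly apply Theorem~\ref{thm:monotonic-EF1} with $t=r^*$ unless $m$ happens to equal $2r^*-1$ or $2r^*$. First I would handle small $m$: if $m \leq 2r^*$, I can pad $G$ with dummy goods (valued at zero marginally by every agent, as in the theorem's proof) up to exactly $2r^*$ goods, apply Theorem~\ref{thm:monotonic-EF1} with $t = r^*$ since there are at most $z-1 = \chi(K(2r^*,r^*,2))-1$ agents, and then remove the dummy goods; balancedness and EF1 are preserved because the dummy goods contribute no marginal value.

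The main obstacle is the case of \emph{large} $m$, where padding up to a fixed $2r^*$ is impossible. Here the idea is to go the other direction: for general $m$, pick $t := \lceil m/2 \rceil$ (so that $m$ is either $2t-1$ or $2t$) and apply Theorem~\ref{thm:monotonic-EF1} with this $t$. This works provided the number of agents is at most $\chi(K(2t,t,2))-1$. But by the definition of $z$ as the \emph{minimum} over $r\geq 2$, we have $\chi(K(2t,t,2)) \geq z$ for every $t\geq 2$, hence $\chi(K(2t,t,2))-1 \geq z-1$. Since we assumed at most $z-1$ agents, the hypothesis of Theorem~\ref{thm:monotonic-EF1} is satisfied, and we obtain the desired balanced EF1 allocation directly. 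The only edge case requiring the padding argument is when $\lceil m/2\rceil < 2$, i.e., $m\leq 2$, where we instead pad up to $4$ goods and apply the theorem with $t=2$.

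In summary, the argument is essentially a clean reduction: for $m\geq 3$ we simply set $t=\lceil m/2\rceil\geq 2$ and invoke Theorem~\ref{thm:monotonic-EF1}, relying on $\chi(K(2t,t,2))\geq z$ to verify the agent-count hypothesis; for $m\leq 2$ we pad with dummy goods up to $4$ goods and apply the theorem with $t=2$. The step I expect to be most delicate is verifying that the dummy-good padding genuinely preserves both balancedness and the EF1 guarantee when goods are removed—this follows because each dummy good has zero marginal value for every agent, so removing it changes neither any bundle's value nor the validity of the ``up to one good'' envy elimination, but it should be stated carefully to confirm that a balanced allocation of the padded instance restricts to a balanced allocation of the original instance.
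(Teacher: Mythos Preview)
Your main argument is correct and is essentially the paper's own proof: for $m\geq 3$ set $t=\lceil m/2\rceil\geq 2$, observe that by the definition of $z$ as a minimum we have $z-1\leq \chi(K(2t,t,2))-1$, and invoke Theorem~\ref{thm:monotonic-EF1}. That is exactly how the paper proceeds.

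However, your handling of the edge case $m\leq 2$ has a genuine gap. Padding with two or more dummy goods does \emph{not} preserve balancedness when the dummies are removed. Concretely, for $m=2$ you pad to four goods and obtain a balanced $(2,2)$ allocation from Theorem~\ref{thm:monotonic-EF1}; but nothing prevents both real goods from landing in the same bundle, so after deleting the two dummies you may be left with bundles of sizes $(2,0)$, which is not balanced. (Adding a single dummy, as in the odd-$m$ reduction inside Theorem~\ref{thm:monotonic-EF1}, is safe because removing one good from a $(t,t)$ split yields $(t-1,t)$; adding two dummies is not.) You correctly flagged this step as delicate, but it in fact fails. The fix is trivial and is what the paper does: handle $m\leq 2$ directly, since with at most two goods any allocation giving each group $\lfloor m/2\rfloor$ or $\lceil m/2\rceil$ goods is balanced and EF1 by monotonicity.
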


\begin{proof}
Suppose first that the number of goods $m$ is even, say $m=2t$. 
If $t=1$, any allocation that gives one good to each group is balanced and EF1, so we may assume that $t\geq 2$.
Since the number of agents is at most $z-1\leq \chi(K(2t,t,2))-1$, Theorem~\ref{thm:monotonic-EF1} implies that a balanced EF1 allocation exists.
A similar proof holds for the case where $m$ is odd.
\end{proof}

\citet{JafariAl17} proved that $K(2r,r,2)\leq 6$ for all $r\geq 2$, and conjectured that this bound is in fact always tight.

\begin{conjecture}[\citep{JafariAl17}]
\label{conj:chromatic}
For any $r\geq 2$, we have $\chi(K(2r,r,2))=6$.
\end{conjecture}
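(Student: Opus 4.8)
The statement to prove is Conjecture~\ref{conj:chromatic}, namely that $\chi(K(2r,r,2)) = 6$ for all $r \geq 2$. Since \citet{JafariAl17} already established the upper bound $\chi(K(2r,r,2)) \leq 6$, the entire task reduces to proving the matching lower bound $\chi(K(2r,r,2)) \geq 6$. I should note at the outset that this is an open conjecture, so any proposal is necessarily speculative about what the right approach might be; what follows is a plan for how one would attack the lower bound.

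The plan is to lower-bound the chromatic number by finding a suitable topological or combinatorial obstruction to a $5$-coloring. The recurring theme in chromatic-number lower bounds for Kneser-type graphs is the topological method pioneered by Lov\'asz: one associates to the graph a simplicial complex (typically the neighborhood complex or a box complex) and shows that its topological connectivity forces $\chi \geq \mathrm{conn} + k$ for appropriate constants. First I would attempt to compute or bound the connectivity of the box complex of $K(2r,r,2)$, since for the ordinary Kneser graph $K(n,r)$ this is exactly how one recovers $\chi = n - 2r + 2$. The difficulty here is that the ``intersect in at most $s-1$ elements'' condition with $s=2$ (i.e.\ disjoint or sharing exactly one element) is genuinely different from ordinary Kneser adjacency, and the standard connectivity computations do not transfer directly.

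A more elementary and perhaps more promising route, given that the target value $6$ is a small constant independent of $r$, is to exhibit an explicit structure forcing six colors. Concretely, I would try to identify within $K(2r,r,2)$ a topological obstruction whose ``size'' is bounded below by $6$ uniformly in $r$, for instance by reducing to a base case via a suitable homomorphism. One natural idea: since $K(4,2,2)$ is the complete graph $K_6$ (as the excerpt notes), if one could construct a graph homomorphism from $K(4,2,2)$ into $K(2r,r,2)$ for every $r \geq 2$, then $\chi(K(2r,r,2)) \geq \chi(K_6) = 6$ would follow immediately, because chromatic number is monotone under homomorphisms. The plan would be to map each $2$-subset $\{i,j\}$ of a $4$-element ground set to a carefully chosen $r$-subset of the $2r$-element ground set, arranged so that disjoint $2$-sets map to $r$-sets intersecting in at most one element. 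Whether such an embedding exists in a way that is uniform in $r$ is the crux.

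The main obstacle I anticipate is precisely this last point: constructing the homomorphism (or, in the topological approach, controlling the connectivity) so that it works for \emph{all} $r \geq 2$ simultaneously. The adjacency condition in $K(2r,r,2)$ is asymmetric in a subtle way---sharing exactly one element counts as adjacency---so a naive padding of the $K_6$ structure by common ``filler'' elements will create unwanted intersections and destroy the homomorphism property. Resolving this would likely require a clever assignment where the images of the six original vertices share elements in a controlled pattern, and verifying the pairwise intersection sizes is exactly $\leq 1$ whenever required is the computation I expect to be delicate. I must emphasize that Jafari and Alipour state this only as a conjecture, so it is entirely possible that no such uniform construction exists and that the lower bound genuinely requires the full strength of the topological machinery, or that the conjecture is simply hard; my proposal should therefore be read as the most natural line of attack rather than a guaranteed proof.
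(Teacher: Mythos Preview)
The paper does not prove this statement: it is presented purely as a conjecture of \citet{JafariAl17} and is used only conditionally (``If Conjecture~\ref{conj:chromatic} is true, then \ldots''). You correctly recognise this and frame your proposal as a speculative plan of attack rather than a proof, which is appropriate.

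That said, your main concrete proposal---finding a graph homomorphism from $K(4,2,2)\cong K_6$ into $K(2r,r,2)$---can be ruled out definitively, not merely anticipated as ``delicate''. Since $K(2r,r,2)$ has no loops, any homomorphism from $K_6$ would have to be injective and hence would exhibit a $6$-clique. But for $r\geq 3$ the clique number of $K(2r,r,2)$ is at most $4$: if an element $x\in[2r]$ lies in $d_x$ of the chosen $r$-subsets, those subsets must be pairwise disjoint outside $x$, forcing $(r-1)d_x \le 2r-1$ and hence $d_x\le 2$; summing gives at most $4r/r=4$ blocks. So no homomorphism $K_6\to K(2r,r,2)$ exists once $r\ge 3$, and this route to the lower bound is closed, not just difficult.

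Your first idea---bounding the connectivity of the box or neighborhood complex---is therefore the only viable candidate among those you list, but you do not develop it beyond naming it. It is worth noting (as the paper does in a footnote) that even the weaker lower bound $\chi(K(2r,r,2))\ge 4$ claimed by \citet{JafariAl17} rests on a proof containing an error, so the problem is wide open and the topological route, if it works at all, would need genuinely new ideas rather than a routine adaptation of Lov\'asz's argument.
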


If Conjecture~\ref{conj:chromatic} is true, then together with Corollary~\ref{cor:monotonic-EF1}, it would imply that a balanced EF1 allocation is guaranteed to exist for two groups with at most 5 agents in total and arbitrary monotonic valuations.\footnote{\citet{JafariAl17} also claimed that $\chi(K(2r,r,2))\geq 4$ for all $r\geq 2$. In combination with our Corollary~\ref{cor:monotonic-EF1}, this would imply that our Theorem~\ref{thm:responsive-2-1} can be generalized to arbitrary monotonic valuations. However, the proof of their Theorem~5.1 contains an error when they claim that the intersection of the two $k$-subsets of color $j$ has at most $i-1$ elements. It is only true that the intersection has at most $i-1$ elements \emph{in each of the two hemispheres}, and therefore at most $2i-2$ elements in total. It is not clear whether the proof can be recovered in light of this error.} The bound of 5 cannot be improved to 6 due to Corollary~\ref{cor:binary-3-3}. Moreover, this result would answer the EF1 existence question in the affirmative for all of the remaining group sizes except for the case $(5,1)$. We remark that for this case, a \emph{balanced} EF1 allocation might not exist, even when valuations are binary.

\begin{proposition}
\label{prop:binary-balanced-5-1}
For $(n_1,n_2)=(5,1)$ and binary valuations, a balanced EF1 allocation does not always exist.
\end{proposition}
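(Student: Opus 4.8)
The plan is to exhibit a single explicit instance with $m=4$ goods, witnessing that balancedness alone destroys the EF1 guarantee that Theorem~\ref{thm:binary-5-1} provides in the unbalanced setting. The key structural observation is that with four goods, a balanced allocation must give each group exactly two goods (since the bundle sizes sum to $4$ and differ by at most one). Thus there are only six candidate balanced allocations, indexed by the $2$-subset $B_1$ assigned to the five-agent group $A$, and it suffices to show that each of them fails EF1 for some agent. The design principle is to make the five agents of $A$ desire five of the six pairs of goods, and to use the singleton agent $b$ to cover the one remaining case that the pair-agents miss.

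Concretely, I would take goods $\{1,2,3,4\}$ and let the five agents of group $A$ have binary valuations corresponding to the pairs $\{1,2\},\{1,3\},\{1,4\},\{2,3\},\{2,4\}$ (that is, all pairs except $\{3,4\}$), while agent $b$ desires exactly $\{1,2\}$. The verification then splits into two kinds of cases according to which $2$-subset $B_1$ group $A$ receives. If the complement $B_2=G\setminus B_1$ is one of the five desired pairs, then the agent whose desired set equals $B_2$ has value $0$ for her group's bundle $B_1$ and value $2$ for $B_2$; since removing a single good from $B_2$ still leaves value $1>0$, this agent does not regard the allocation as EF1. This handles every choice of $B_1$ except $B_1=\{1,2\}$, whose complement is the undesired pair $\{3,4\}$.

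The one remaining allocation, $B_1=\{1,2\}$ and $B_2=\{3,4\}$, is exactly where agent $b$ does the work: here $u_b(B_2)=u_b(\{3,4\})=0$ while $B_1$ contains both of $b$'s desired goods, so removing one good from $B_1$ still leaves value $1>0$, and $b$ fails EF1. Hence all six balanced allocations are blocked, which proves the claim. The main subtlety to get right is precisely this single case: one should check that in the allocation $B_1=\{1,2\}$ none of the five pair-agents objects (each of them values both bundles at exactly $1$), so that $b$'s envy is genuinely needed, and conversely that the partial valuation of $b$ does not accidentally make some unbalanced allocation infeasible. Indeed, giving group $A$ the three goods $\{2,3,4\}$ and $b$ the single good $\{1\}$ yields an (unbalanced) envy-free allocation, consistent with Theorem~\ref{thm:binary-5-1}; this confirms that it is the balancedness constraint capping $A$ at two goods, rather than the valuations themselves, that is responsible for the impossibility.
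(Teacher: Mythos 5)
Your proof is correct and uses exactly the same instance as the paper: five agents in group $A$ desiring all pairs of the four goods except $\{3,4\}$, and the singleton agent desiring $\{1,2\}$. Your case analysis over the six balanced allocations matches the paper's argument (and spells out the verification in more detail than the paper does), so there is nothing to fix.
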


\begin{proof}
Suppose that there are four goods. The utilities in the first group are given by $\textbf{u}_{11}=(1,1,0,0)$, $\textbf{u}_{12}=(1,0,1,0)$,
$\textbf{u}_{13}=(1,0,0,1)$, $\textbf{u}_{14}=(0,1,1,0)$,
$\textbf{u}_{15}=(0,1,0,1)$, and in the second group by $\textbf{u}_{21}=(1,1,0,0)$. The only balanced allocation that is EF1 for all agents in the first group is the allocation that gives the first two goods to the first group. However, this allocation is not EF1 for the singleton agent, so no balanced allocation is EF1.\footnote{This instance does, however, admit an EF1 allocation. For example, the allocation that gives only the first good to the singleton agent is EF1.}
\end{proof}

We note that our techniques in Theorem~\ref{thm:monotonic-EF1} and Corollary~\ref{cor:monotonic-EF1} can be extended to weaker relaxations of envy-freeness. 
For any positive integer $c$, it is known that an EF$c$ allocation is guaranteed to exist for two groups with at most $c+1$ agents in total and additive valuations \citep{SegalhaleviSu18}. On the other hand, letting $z_c:=\min_{r\geq c+1}\chi(K(2r,r,c+1))$, a similar proof as in Theorem~\ref{thm:monotonic-EF1} and Corollary~\ref{cor:monotonic-EF1} shows that an EF$c$ allocation can always be found when the two groups contain at most $z_c-1$ agents in total with arbitrary monotonic valuations. \citet{JafariAl17} proved that $\chi(K(2r,r,c+1))\leq\binom{2c+2}{c+1}$ for all $r\geq c+1$. If this inequality becomes equality for all $r$ (in which case we would have $z_c=\binom{2c+2}{c+1}$), it would imply an exponential improvement in the relation between the number of agents and the number of goods in the EF$c$ approximation. The bound $z_c-1$ would also be tight due to the instance in Proposition~\ref{prop:binary-equal}.

We conclude this section by showing that existence can no longer be guaranteed if we strengthen the fairness requirement from EF1 to EFX. Recall that for binary valuations, an EFX allocation always exists when one group contains at most five agents and the other group is a singleton. We show that this is not the case for additive valuations, even when the first group contains only two agents.

\begin{proposition}
\label{prop:additive-EFX}
For $(n_1,n_2)=(2,1)$ and additive valuations, an EFX allocation does not always exist.
\end{proposition}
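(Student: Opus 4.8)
The plan is to exhibit a single explicit instance with two agents in the first group and one in the second group, all with additive valuations, and show by a short case analysis that no allocation can be EFX for all three agents. Since the negation of EFX only fails when an agent values some good in the other group positively, I would design the valuations so that every nontrivial split leaves some agent with a complaint that cannot be removed by deleting one positively valued good.

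First I would fix a small number of goods—three is the natural first attempt, since the $(2,1)$ setting with binary valuations already required six goods to break EFX$_0$ but EFX is a weaker requirement and should break more easily. I would give the singleton agent $a_{21}$ an additive valuation that makes her insist on receiving a substantial share: for instance, let her value all goods so that to be EFX she must collect enough total value that the complement bundle left to group $A$ is small. Simultaneously, I would give the two agents $a_{11}$ and $a_{12}$ in the first group \emph{conflicting} preferences (e.g., each strongly prefers a different good), so that whichever small bundle group $A$ ends up with, at least one of these two agents regards the other group's bundle as too valuable even after removing its most valued good from her perspective.

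The key steps, in order, would be: (i) write down the explicit valuation vectors $\textbf{u}_{11}, \textbf{u}_{12}, \textbf{u}_{21}$; (ii) determine, from $a_{21}$'s EFX constraint, a lower bound on the value (and hence roughly the cardinality or composition) of the bundle $B_2$ she must receive, thereby restricting the possible bundles $B_1 = G \setminus B_2$ that group $A$ can get; (iii) enumerate the few remaining candidate allocations consistent with $a_{21}$'s constraint; and (iv) for each candidate, show that at least one of $a_{11}, a_{12}$ has an EFX violation—that is, exhibit a good $g \in B_2$ with positive value to that agent such that $u(B_1) < u(B_2 \setminus \{g\})$. I would use the asymmetry between $a_{11}$ and $a_{12}$ to guarantee that no single allocation satisfies both, since improving one group-$A$ agent's bundle necessarily worsens the other's.

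The main obstacle I anticipate is balancing the three valuations simultaneously so that the constraints genuinely conflict: $a_{21}$'s demand must force $B_1$ to be small enough that one of the group-$A$ agents is always left envious up to any good, while at the same time the two group-$A$ agents' preferences must be arranged so that no small bundle $B_1$ pleases both. Getting the numeric values right so that the EFX inequalities hold strictly (recall EFX only concerns positively valued goods, so I must ensure the critical goods actually have positive value to the complaining agent) is the delicate part; a mild tuning of the utilities—using distinct weights rather than $0/1$ values—should make every required strict inequality go through, and I would verify each of the finitely many cases directly.
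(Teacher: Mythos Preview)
Your overall strategy---produce one explicit instance and do a finite case analysis driven by the singleton agent's EFX constraint---is exactly what the paper does. However, the proposal has not yet committed to an instance, and the concrete starting point you suggest, three goods, cannot work.

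With only three goods an EFX allocation \emph{always} exists regardless of the additive valuations: let $g^*$ be a most valuable good for $a_{21}$ and set $B_2=\{g^*\}$, $B_1=G\setminus\{g^*\}$. For $a_{21}$, removing any positively valued $g\in B_1$ leaves a single good $g'$, and $u_{21}(g^*)\ge u_{21}(g')$ by choice of $g^*$, so she is EFX. For each agent in group~$A$, removing the unique good from $B_2$ leaves the empty set, so the EFX inequality is $u_{1j}(B_1)\ge 0$, which is trivial. Hence three goods are never enough to witness non-existence, and your ``first attempt'' would necessarily fail; you need at least four.

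The paper's instance uses four goods with $\textbf{u}_{11}=(3,1,1,1)$, $\textbf{u}_{12}=(1,3,1,1)$, $\textbf{u}_{21}=(3,3,1,1)$. The singleton's EFX constraint forces $B_2$ to contain either both of $g_1,g_2$, or one of $g_1,g_2$ together with at least one of $g_3,g_4$. The first option leaves both group-$A$ agents with value $2$ while $B_2\setminus\{g_i\}$ is worth $3$ to agent $a_{1i}$; the second option (say $B_2\supseteq\{g_1,g_3\}$) gives $u_{11}(B_1)\le 2 < 3 = u_{11}(\{g_1\}) \le u_{11}(B_2\setminus\{g_3\})$. Once you move to four goods your outline goes through essentially verbatim, but as written the proposal is incomplete because it neither supplies the instance nor recognizes that the three-good case is vacuous.
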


\begin{proof}
Suppose that there are four goods. The utilities are given by $\textbf{u}_{11}=(3,1,1,1)$, $\textbf{u}_{12}=(1,3,1,1)$, and $\textbf{u}_{21}=(3,3,1,1)$. In an EFX allocation, the singleton agent in the second group needs either both of the first two goods, or one of the first two and at least one of the last two goods. The former option leaves both agents in the first group unsatisfied. For the latter option, assume without loss of generality that the singleton agent receives $g_1$ and $g_3$. Then the resulting allocation is not EFX for $a_{11}$. Hence there is no EFX allocation in this instance.
\end{proof}

Since an EFX allocation always exists when $(n_1,n_2)=(1,1)$ for arbitrary monotonic valuations \citep{PlautRo18}, we have a complete characterization of EFX for every class of valuations between additive and monotonic.

\section{Variable Groups}
\label{sec:variable}

Thus far, we have worked under the assumption that the partition of agents into groups is determined in advance. This assumption is appropriate when we consider, for example, membership in a family or citizenship of a country. In other settings, however, the choice of the group to which the agents belong can be made by a central authority or by the agents themselves. This applies to membership in a library, gym, or other facilities.

With this motivation in mind, we depart from the framework of fixed groups in this section, and instead assume that the partition of the agents into groups can be chosen along with the allocation of the goods. Under this assumption, finding an EF1, EFX, or even envy-free allocation is trivial: simply put all agents in one group and allocate all goods to that group. However, this may lead to undesirable situations where a gym is overcrowded or a library does not have enough space to hold all of its books. As we will show, it is nevertheless possible to obtain a fair outcome that is moreover balanced with respect to both the agents and the goods, for any number of agents with general valuations.

\subsection{Two Groups}
\label{sec:variable-two}

We start with two groups and show that EF1 can be guaranteed for any desired sizes of these groups.
Our algorithm generalizes the discrete ``cut-and-choose'' algorithm for allocating indivisible goods between two individual agents \citep{BiloCaFl19,OhPrSu19}.

\begin{theorem}
\label{thm:EF1-balanced-agent}
Let $n$ be any positive integer. Suppose that there are $n$ agents with arbitrary monotonic valuations, and let $n_1$ and $n_2$ be non-negative integers with $n_1+n_2=n$.  There always exists a partition of the agents into two groups such that group $i\in\{1,2\}$ contains $n_i$ agents, along with an EF1 allocation of the goods to the two groups.
\end{theorem}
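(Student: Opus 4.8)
The plan is to decouple the two tasks: first fix a single partition of the \emph{goods} into two bundles $(B_1,B_2)$ that ``enough'' agents are willing to live with on each side, and only afterwards decide which agent joins which group. For a bundle $S\subseteq G$, write $A_S$ for the set of agents who regard $S$ as EF1 (i.e.\ who find the allocation giving $S$ to their own group and $G\setminus S$ to the other group EF1 for them). The first observation is that the goal reduces to a purely combinatorial assignment: if we can find $B_1$, with $B_2:=G\setminus B_1$, such that $|A_{B_1}|\ge n_1$ and $|A_{B_2}|\ge n_2$, then we are done. Indeed, every agent weakly prefers one of the two bundles and hence finds that bundle envy-free, and a fortiori EF1, so every agent lies in $A_{B_1}\cup A_{B_2}$. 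Consequently the agents who \emph{cannot} go to group $2$ (those outside $A_{B_2}$) number $n-|A_{B_2}|\le n_1$, and those who cannot go to group $1$ number $n-|A_{B_1}|\le n_2$; placing these forced agents and then distributing the remaining ``flexible'' agents of $A_{B_1}\cap A_{B_2}$ makes the two groups have exactly $n_1$ and $n_2$ members. This is the ``choose'' phase of cut-and-choose, run simultaneously for all agents.

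The heart of the proof is the ``cut'' phase: showing a bundle $B_1$ with $|A_{B_1}|\ge n_1$ and $|A_{B_2}|\ge n_2$ always exists. I would obtain it by an extremal argument. The family $U:=\{S: |A_S|\ge n_1\}$ is nonempty, since $G\in U$ (everyone finds $G$ EF1 because the other group then gets nothing). Let $B_1$ be an inclusion-minimal element of $U$ and put $B_2=G\setminus B_1$; I claim $|A_{B_2}|\ge n_2$, which completes the argument. If $B_1=\emptyset$ this is immediate, as then $B_2=G$ and $|A_G|=n\ge n_2$, so assume $B_1\neq\emptyset$.

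Suppose for contradiction that $|A_{B_2}|<n_2$, and let $N$ be the set of agents who do \emph{not} find $B_2$ EF1, so that $|N|=n-|A_{B_2}|\ge n_1+1$. For an agent $a\in N$ with valuation $u_a$, failing EF1 for $B_2$ means exactly that $u_a(B_2)<u_a(B_1\setminus\{g\})$ for \emph{every} $g\in B_1$. Now fix any $h\in B_1$ and consider the bundle $B_1\setminus\{h\}$, whose complement is $B_2\cup\{h\}$: deleting the single good $h$ from the other group's bundle leaves exactly $B_2$, and the inequality above with $g=h$ gives $u_a(B_1\setminus\{h\})>u_a(B_2)$. Hence $a$ finds $B_1\setminus\{h\}$ EF1, i.e.\ $a\in A_{B_1\setminus\{h\}}$. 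As this holds for every $a\in N$, we get $|A_{B_1\setminus\{h\}}|\ge|N|\ge n_1+1>n_1$, so $B_1\setminus\{h\}\in U$, contradicting the minimality of $B_1$. Therefore $|A_{B_2}|\ge n_2$, as claimed.

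The step I expect to be the main obstacle is precisely this existence of a good cut. The naive idea of sweeping goods one at a time from $B_2$ into $B_1$ and invoking a discrete intermediate-value argument on the counts $|A_{B_1}|$ and $|A_{B_2}|$ fails: when a single good moves, both counts can jump by more than one, so there need be no sweep position at which both size requirements hold simultaneously. Passing instead to an inclusion-minimal accepted bundle and squeezing information out of the EF1 \emph{inequality} (rather than from the monotone counts alone) is what circumvents this discreteness; this is exactly the place where the EF1 relaxation, as opposed to exact envy-freeness, is used. Everything else — the degree-constrained assignment of agents to groups of the prescribed sizes — is routine once the two count inequalities are in hand.
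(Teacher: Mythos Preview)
Your proof is correct and rests on the same key observation as the paper's: if an agent does \emph{not} find $B_2$ EF1, then for every $h\in B_1$ she finds $B_1\setminus\{h\}$ EF1 (witnessed by removing $h$ from the complement). The paper deploys this via a linear sweep---arrange the goods on a line, grow a prefix one good at a time, and stop at the first prefix $G_L\cup\{g\}$ that at least $n_1$ agents find EF1; the second group then consists of agents who did not find the predecessor $G_L$ EF1, and exactly the lemma above (with $h=g$) shows they find the suffix $G_R$ EF1. You deploy the same lemma via an extremal argument, taking an inclusion-minimal $B_1$ with $|A_{B_1}|\ge n_1$ and deriving a contradiction from $|A_{B_2}|<n_2$.

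The two routes are close variants. Your inclusion-minimal formulation is arguably cleaner: it avoids fixing an arbitrary linear order and sidesteps the (easy but still required) check that agents who found the predecessor $G_L$ EF1 also find $G_L\cup\{g\}$ EF1. On the other hand, your remark that the ``sweeping'' idea fails is slightly misleading: the naive intermediate-value argument on the counts does fail for the reason you give, but the paper's sweep succeeds precisely because it, like you, uses the EF1 inequality structurally rather than tracking counts monotonically.
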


\begin{proof}
Arrange the goods in a line. Starting with an empty bundle, we add one good at a time from the left until at least $n_1$ agents find the bundle to be EF1. If this condition is met before we add any good, we give $n_1$ of these agents an empty bundle and the remaining $n_2$ agents the entire set $G$. Otherwise, denote by $g$ the last good added to the bundle. We assign to the first group all agents who view the bundle as EF1 before the addition of $g$, along with an arbitrary subset of those who find it EF1 after $g$ is added so that the first group has size $n_1$. We allocate this bundle to the first group, and the remaining goods to the second group, which consists of the remaining agents.

Since the entire set $G$ is EF1 for all agents, the process terminates. By construction, the agents in the first group regard the allocation as EF1, so we only need to show that the same holds for the second group. This is trivial if the second group receives the entire set $G$. Otherwise, let $G_L$ and $G_R$ be the bundles to the left and right of $g$, respectively (not containing $g$). Every agent in the second group does not find $G_L$ to be EF1, which means that she has more value for $G_R$ than $G_L$. This implies that the agent finds $G_R$ to be EF1, as desired.
\end{proof}

For the case where a balanced allocation of the goods is required, we prove that this can also be achieved and, in fact, it can always be combined with a balanced partition of the agents.
This means that in our gym and library applications, it is possible to reach a balance in terms of the users as well as the resources.
Our algorithm arranges the goods in a circle and moves a knife around the center of the circle, so all partitions of the goods that we consider are balanced.
The crux of our proof lies in showing that at some point during this process, the numbers of agents who prefer either part of the partition will be roughly equal.

\begin{theorem}
\label{thm:EF1-balanced-agent-good}
Let $n$ be any positive integer, and suppose that there are $n$ agents with arbitrary monotonic valuations. There always exists a balanced partition of the agents into two groups along with a balanced EF1 allocation of the goods to the two groups.
\end{theorem}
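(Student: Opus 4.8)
The plan is to arrange the goods on a circle and sweep a ``knife'' that always cuts off an arc of $\lceil m/2\rceil$ consecutive goods, so that every partition we consider is automatically balanced. Before setting the knife in motion, I would isolate two simple facts. First, exactly as in the proof of Theorem~\ref{thm:EF1-balanced-agent}, monotonicity implies that for any partition of $G$ into two bundles, each agent finds the bundle she values weakly more to be envy-free, and hence EF1; thus every agent finds at least one of the two bundles EF1. Second, I would reduce the existence of a balanced EF1 agent-assignment, for a fixed balanced partition of the goods $(X,\bar X)$, to a counting condition. An agent may be placed in the group receiving $X$ only if she finds $X$ to be EF1, and in the group receiving $\bar X$ only if she finds $\bar X$ to be EF1; since the agents who reject a bundle are forced into the opposite group and the two group sizes are $\lceil n/2\rceil$ and $\lfloor n/2\rfloor$ (assignable to either side), a balanced EF1 assignment exists \emph{iff at least $\lfloor n/2\rfloor$ agents find each of $X$ and $\bar X$ to be EF1}.

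It therefore suffices to find one balanced partition at which each side is accepted (found EF1) by at least $\lfloor n/2\rfloor$ agents. Here I would use the rotational symmetry: as the arc rotates halfway around the circle the two sides are exchanged, so if $A(\cdot)$ denotes the number of agents who accept the current ``side~1'' bundle, then the value of $A$ at an orientation and at its antipode are exactly the two acceptance counts of that (unordered) partition, and these swap. Since every agent accepts at least one side, the two counts of any partition sum to at least $n$, so at most one of them can fall below $\lfloor n/2\rfloor$. If some orientation had both counts at least $\lfloor n/2\rfloor$ we would be done; so suppose not. Then every orientation has exactly one ``under-accepted'' side (fewer than $\lfloor n/2\rfloor$ acceptors, hence a strict majority of \emph{rejectors}) and one ``over-accepted'' side, and these roles are exchanged between antipodal orientations. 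Traversing the circle from an orientation to its antipode, the side carrying the rejecting majority must switch, and the goal is to show that at the switch the counts are forced into the admissible band $[\lfloor n/2\rfloor,\lceil n/2\rceil]$, contradicting the assumption.

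I expect the switch step to be the main obstacle. Passing from one balanced arc to the next swaps \emph{two} goods at once, and even a single good crossing the knife can flip the EF1-verdict of many agents simultaneously, so $A$ can change by far more than one between consecutive balanced orientations; a na\"ive discrete intermediate-value argument therefore does not apply. The way I would try to control this is to refine the motion into single-good steps through intermediate cuts that are off-balance by one good, and to exploit the monotonicity already used in Theorem~\ref{thm:EF1-balanced-agent}: when a bundle \emph{gains} a good, the set of agents who find that bundle EF1 only grows while the set who find its complement EF1 only shrinks. Along such single-good steps the two acceptance counts move monotonically and in opposite directions, which is what lets one follow the rejecting majority from one side to the other and locate a balanced orientation where neither acceptance count has dropped below $\lfloor n/2\rfloor$. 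Making this monotone-interpolation-plus-parity argument fully rigorous --- in particular handling ties (agents indifferent between the two bundles) and ensuring the certified orientation is exactly balanced rather than one of the intermediate off-balance cuts --- is the delicate part of the proof.

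Finally, I would clean up the parity of the parameters. An odd number of goods can be handled by appending a dummy good that every agent values at $0$; running the argument on the $2t$ goods and deleting the dummy leaves a balanced partition of the original $m=2t-1$ goods, and EF1-judgments are unaffected. An odd number of agents is already absorbed by the $\lfloor n/2\rfloor$/$\lceil n/2\rceil$ slack in the counting condition. Once a good balanced partition $(X,\bar X)$ is found, I read off the outcome: assign every agent who rejects $X$ to the $\bar X$-group and every agent who rejects $\bar X$ to the $X$-group (these sets have size at most $\lceil n/2\rceil$), and distribute the remaining agents, who accept both bundles, so as to meet the exact sizes $\lceil n/2\rceil$ and $\lfloor n/2\rfloor$. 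The result is a balanced partition of the agents together with a balanced EF1 allocation of the goods.
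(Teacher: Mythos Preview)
Your setup---arranging the goods on a circle, considering only balanced cuts, reducing to the counting condition ``each side is accepted by at least $\lfloor n/2\rfloor$ agents,'' and observing that antipodal orientations swap the two sides---matches the paper's exactly. The gap is precisely where you flag it: the ``switch step.'' You propose to interpolate through off-balance single-good cuts and then somehow pin down a balanced orientation in the admissible band, but you do not actually carry this out, and it is not clear that the monotonicity-of-counts observation alone is enough to force the certified cut to be balanced.

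The paper bypasses this entirely with a one-line structural lemma that you are missing: \emph{any agent who rejects side~2 at one balanced orientation accepts side~1 at the next balanced orientation}. Writing $(G_1,G_2)$ for the current cut and $(H_1,H_2)$ for the cut after one clockwise rotation (so $H_1=(G_1\setminus\{g\})\cup\{\bar g\}$ and $H_2=(G_2\setminus\{\bar g\})\cup\{g\}$), an agent who does not find $G_2$ EF1 satisfies $u(G_2)<u(G_1\setminus\{g\})$, and hence
\[
u(H_1)\;\ge\;u(G_1\setminus\{g\})\;>\;u(G_2)\;\ge\;u(G_2\setminus\{\bar g\})\;=\;u(H_2\setminus\{g\}),
\]
so she finds $H_1$ EF1. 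This immediately turns the contradiction hypothesis into an invariant: if more than half the agents reject side~2 at orientation~$j$, then more than half accept side~1 at orientation~$j{+}1$, and by the hypothesis again more than half reject side~2 at orientation~$j{+}1$. After $t$ rotations side~2 coincides with the original side~1, which was accepted by a majority---contradiction. No intermediate-value reasoning, no off-balance cuts, no tracking of counts across jumps.

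So your plan is on the right track but stalls at the only nontrivial point; the fix is to track the \emph{set} of side-2 rejectors across one full rotation step rather than trying to control how aggregate acceptance counts move. (Your parity reductions are fine; the paper handles odd $n$ by adding a dummy agent, but your $\lfloor n/2\rfloor$ slack also works.)
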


\begin{proof}
Suppose first that both the number of agents $n$ and the number of goods $m$ are even, say $n=2s$ and $m=2t$. Assume for contradiction that there is no balanced partition along with a balanced allocation.

Arrange the goods around a circle with equal spacing between adjacent goods. Imagine a knife that cuts through the center of the circle, dividing the goods into two bundles $G_1$ and $G_2$, each of size $t$. By our assumption, any balanced assignment of the agents to $G_1$ and $G_2$ does not result in an EF1 allocation. On the other hand, there does exist an assignment such that the resulting allocation is EF1 (e.g., an assignment that gives every agent her favorite bundle). Consider such an assignment that moreover minimizes the difference between the numbers of agents in the two groups. Assume without loss of generality that more than half of the agents are assigned to $G_1$. If one of these agents finds $G_2$ to be EF1, we can reassign her to $G_2$ and reduce the discrepancy between the two groups. Hence all of these agents do not find $G_2$ to be EF1.

Next, we rotate the knife clockwise by one position, thereby moving a good $g$ from $G_1$ to $G_2$ and another good $\overline{g}$ from $G_2$ to $G_1$. Call the resulting bundles $H_1=(G_1\cup\{\overline{g}\})\backslash \{g\}$ and $H_2=(G_2\cup\{g\})\backslash \{\overline{g}\}$, respectively. We claim that the agents who do not find $G_2$ to be EF1 regard $H_1$ as EF1. Denoting the valuation of an arbitrary such agent by $u$, we have
\begin{align*}
u(H_1)&\geq u(G_1\backslash\{g\}) \\
&>u(G_2) \geq u(G_2\backslash\{\overline{g}\})=u(H_2\backslash\{g\}),
\end{align*}
where the first and third inequalities follow from monotonicity. So the agent indeed finds $H_1$ to be EF1. Since more than half of the agents do not find $G_2$ to be EF1, more than half of the agents regard $H_1$ as EF1. By our assumption, any balanced assignment of the agents to $H_1$ and $H_2$ does not result in an EF1 allocation. It follows that in any assignment such that the resulting allocation is EF1, more than half of the agents are assigned to $H_1$. Additionally, more than half of the agents do not find $H_2$ to be EF1.

If we rotate the knife clockwise repeatedly, the same argument tells us that more than half of the agents do not find the second bundle (i.e., $G_2$, $H_2$, and so on) to be EF1. After $t$ rotation steps, the knife has rotated halfway around the circle, and the second bundle coincides with the original first bundle $G_1$. However, we know from earlier that more than half of the agents find $G_1$ to be EF1. This yields the desired contradiction.

Suppose now that $n$ is odd or $m$ is odd (or both). If $n$ is odd, we add a dummy agent with an arbitrary monotonic valuation. If $m$ is odd, we add a dummy good that always yields zero marginal utility for every agent. We then repeat the same procedure as in the case where $m$ and $n$ are even, and remove the dummy agent and/or the dummy good. The resulting partition and allocation are both balanced, and a similar proof as before shows that the allocation is EF1.
\end{proof}

Theorem~\ref{thm:EF1-balanced-agent-good} yields the following result on individual fair division, which is new to the best of our knowledge.

\begin{corollary}
\label{cor:EF1-individual-balanced}
For two individual agents with arbitrary monotonic valuations, there always exists a balanced EF1 allocation.
\end{corollary}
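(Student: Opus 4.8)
The plan is to derive this directly from Theorem~\ref{thm:EF1-balanced-agent-good} by specializing it to the case of exactly $n=2$ agents. Since that theorem guarantees a balanced partition of any collection of agents into two groups together with a balanced EF1 allocation of the goods, and since here the ``agents'' are precisely the two individuals in question, the only work needed is to interpret the group-level conclusion as a statement about individuals.

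First I would invoke Theorem~\ref{thm:EF1-balanced-agent-good} with the two given agents playing the role of the $n=2$ agents. This yields a balanced partition of these two agents into two groups, along with a balanced EF1 allocation of the goods to those groups. The key observation is that when $n=2$, the balancedness condition $|n_1-n_2|\leq 1$ forces $n_1=n_2=1$: any balanced partition of two agents into two groups must place exactly one agent in each group. Hence each group is a singleton, and the two groups are simply the two individual agents.

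It then remains to note that the notions of group EF1 and individual EF1 coincide for singleton groups: an allocation is EF1 for a group consisting of the single agent $a_i$ exactly when it is EF1 for $a_i$ as an individual. Consequently, the balanced EF1 allocation to the two (singleton) groups produced by Theorem~\ref{thm:EF1-balanced-agent-good} is precisely a balanced EF1 allocation to the two individual agents, as required.

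Since the entire argument is a reinterpretation of an already-established result, I do not anticipate any genuine obstacle; the only point requiring care is the verification that a balanced partition of two agents necessarily yields singleton groups, which I expect to be immediate. The mild novelty, as the surrounding text emphasizes, is that balancedness of the \emph{goods} comes for free in the individual two-agent setting, a guarantee that does not appear to have been recorded previously in the individual fair division literature.
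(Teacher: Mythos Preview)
Your proposal is correct and matches the paper's approach: the corollary is stated as an immediate consequence of Theorem~\ref{thm:EF1-balanced-agent-good}, and the paper does not give a separate proof. Your only added step---observing that a balanced partition of two agents into two groups forces each group to be a singleton---is exactly the trivial unpacking needed.
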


When valuations are responsive, a balanced EF1 allocation (for arbitrarily many agents) can also be obtained by the round-robin algorithm, which lets the agents take turns choosing their favorite good from the remaining goods until all goods are taken (see, e.g., \citep{CaragiannisKuMo16}). However, the round-robin algorithm does not work for arbitrary monotonic valuations.

Turning our attention to EFX, we show that if we require the partition of the agents to be balanced, an EFX allocation might not exist; this complements Theorems~\ref{thm:EF1-balanced-agent} and \ref{thm:EF1-balanced-agent-good} above. In addition, a balanced EFX allocation does not necessarily exist for two individual agents even when the agents have identical additive valuations, which complements Corollary~\ref{cor:EF1-individual-balanced}.

\begin{proposition}
\label{prop:EFX-balanced-agent}
There does not always exist a balanced partition of the agents into two groups along with an EFX allocation of the goods to the two groups, even when valuations are additive.
\end{proposition}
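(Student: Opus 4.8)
The plan is to exhibit a single instance, with additive valuations, in which no balanced partition of the agents into two groups can be combined with an EFX allocation of the goods. The natural building block is the $(2,1)$ obstruction of Proposition~\ref{prop:additive-EFX}: there a single agent who values two goods highly (with the remaining goods as cheap fillers) cannot be made EFX \emph{simultaneously} with two agents who each covet one of those two goods, because the two high goods are indivisible---giving both to one side disappoints the agent who needs both, while splitting them or handing them over disappoints one of the other two. I would lift this obstruction to the variable setting by introducing several agents drawn from a few conflicting ``two-good-demanding'' types, together with a matching set of goods, and by choosing the agent multiplicities and the group sizes so that balancedness itself forces a copy of the obstruction to appear inside every balanced partition.

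Concretely, I would first fix the valuation types and the good set so that whenever a group contains a pair of agents with conflicting demands, that group cannot be EFX for both of its members against any complementary bundle, replaying the value-$3$/value-$1$ counting argument of Proposition~\ref{prop:additive-EFX}. Second, I would pick the number of agents of each type so that no balanced partition can separate the conflicting agents into ``compatible'' groups, so that every balanced partition necessarily leaves some group in a configuration isomorphic to the $(2,1)$ obstruction. Third, since the statement requires ruling out \emph{every} balanced partition (not just one), I would exploit the symmetry of the construction to reduce the a priori large collection of balanced partitions to a small number of representatives, and then verify for each representative, by a finite case analysis over the possible good-allocations, that some agent always fails EFX.

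The main obstacle is the robustness of the construction against re-partitioning. The flexibility of the variable model---we may choose both the partition and the allocation---creates ``escape'' allocations that naive instances admit, and this is precisely what makes the small candidates fail: for example, handing a group a single commonly-desired good typically makes EFX trivial for that group, since removing one positively-valued good from the complement already eliminates the envy, and re-pairing the agents can dissolve a conflict that was unavoidable under a different grouping. Eliminating all such escapes is exactly what dictates the choice of types, values, and multiplicities, and confirming that none survive---for every balanced partition and every good-allocation---is the technical heart of the argument. I expect this verification, rather than the construction itself, to be the most delicate step, and I would lean on symmetry to keep the case analysis finite and manageable.
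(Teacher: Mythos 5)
There is a genuine gap: your proposal is a plan, not a proof. You never exhibit a concrete instance---no valuation types, no multiplicities, no number of goods---and you yourself defer the ``technical heart'' (checking every balanced partition and every allocation) to future verification. Worse, the structural claim your plan rests on cannot be realized as stated. You want valuations such that ``whenever a group contains a pair of agents with conflicting demands, that group cannot be EFX for both of its members against any complementary bundle.'' This is impossible: if that group receives \emph{all} the goods, every one of its members is fully envy-free (the complementary bundle is empty), regardless of any conflict. Since the proposition constrains only the partition of agents---the allocation of goods is unrestricted---this escape is always available, so the obstruction cannot come from intra-group conflicts at all. It has to come from the other direction: from limits on \emph{which agents can tolerate being in the group that receives the worse bundle}.

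That is exactly how the paper's proof works, and it is far simpler than lifting Proposition~\ref{prop:additive-EFX}. Take three goods and six agents in three symmetric pairs: $\textbf{u}_1=\textbf{u}_2=(3,1,1)$, $\textbf{u}_3=\textbf{u}_4=(1,3,1)$, $\textbf{u}_5=\textbf{u}_6=(1,1,3)$. If all three goods go to one bundle, any agent assigned to the empty bundle fails EFX (removing one good from the full bundle still leaves value $2>0$), so all six agents must join one group---not balanced. Otherwise the goods split $2$--$1$, say the singleton bundle is $\{g_j\}$; only the two agents who value $g_j$ at $3$ find that bundle EFX (for any other agent, removing the cheap good from the two-good bundle leaves value $3>1$), so the singleton-bundle group has at most two agents while balance requires three. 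Note that no intra-group conflict is ever invoked---in the $2$--$1$ split, \emph{all six} agents are EFX in the two-good group; the entire argument is about who can accept the small bundle. Your instinct to exploit symmetry and to worry about escape allocations is sound, but without a concrete instance and with step one pointing at the wrong mechanism, the proposal does not establish the proposition.
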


\begin{proof}
Suppose that there are six agents and three goods. The utilities of the agents are given by $\textbf{u}_1=\textbf{u}_2=(3,1,1)$, $\textbf{u}_3=\textbf{u}_4=(1,3,1)$, and $\textbf{u}_5=\textbf{u}_6=(1,1,3)$. If the allocation places all three goods in one bundle, all agents must be assigned to that bundle for the allocation to be EFX. Else, the allocation places two goods in one bundle and one good in the other bundle. In this case, at most two agents can be assigned to the second bundle in an EFX allocation, so the partition of agents would not be balanced. Hence there is no balanced partition of agents if the allocation must be EFX.
\end{proof}

\begin{proposition}
\label{prop:EFX-individual-balanced}
Let $m$ be a positive integer. There exists an instance with two individual agents who have identical additive valuations and $m$ goods, such that in any EFX allocation, one of the agents receives exactly one good.
\end{proposition}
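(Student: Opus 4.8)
The plan is to exhibit a single family of instances, one for each $m$, built around one \emph{dominant} good. I would take the common additive valuation $u$ with $u(g_1)=m$ and $u(g_j)=1$ for every $j\in\{2,\dots,m\}$, so that the $m-1$ cheap goods together are worth only $m-1<u(g_1)$. Since the two agents share this valuation, in any allocation the expensive good $g_1$ lands in exactly one agent's bundle; the goal is to show that in an EFX allocation this agent can hold nothing besides $g_1$, which immediately yields the claim.

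First I would verify that the instance actually admits an EFX allocation, so that the statement is not vacuous: consider the allocation that gives $\{g_1\}$ to the first agent and all remaining goods to the second. The first agent has value $m$, which is at least $u(\{g_2,\dots,g_m\}\setminus\{g\})=m-2$ for every cheap good $g$; the second agent only needs to compare against the single good $g_1$, and removing it leaves value $0$. Hence both agents find this allocation EFX.

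The crux is the forcing argument. Fix any EFX allocation and, without loss of generality, let $P=\{g_1\}\cup A$ be the bundle containing $g_1$, where $A$ is a (possibly empty) set of cheap goods, and let $Q$ be the complementary bundle. Suppose for contradiction that $A\neq\emptyset$ and pick any $g\in A$, which has positive value. Applying the EFX condition for the agent holding $Q$ against this good gives $u(Q)\ge u(P\setminus\{g\})=u(P)-1$. Writing $a:=|A|$, the left-hand side equals $(m-1)-a$ while the right-hand side equals $(m+a)-1$, forcing $-a\ge a$ and hence $a=0$, a contradiction. Therefore the holder of $g_1$ receives exactly that one good, and the statement follows for every $m\ge 1$.

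I expect the only real obstacle to be calibrating the magnitude of $u(g_1)$: it must be large enough that the holder of $g_1$ cannot afford to surrender any value to a cheap good (which drives the forcing step), yet the cheap goods must remain collectively valuable enough that the agent receiving all of them is itself EFX (which keeps the instance non-vacuous). Setting $u(g_1)=m$ threads this needle uniformly in $m$, and the verification is routine once the dominant-good idea is in place. Finally, since for $m\ge 4$ the conclusion ``one agent receives exactly one good'' is incompatible with balancedness ($||B_1|-|B_2||\le 1$ would require each bundle to have size at least two), this Proposition delivers the promised failure of balanced EFX for two individual agents with identical additive valuations, complementing Corollary~\ref{cor:EF1-individual-balanced}.
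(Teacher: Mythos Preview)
Your proof is correct and follows essentially the same approach as the paper: both use the instance $u(g_1)=m$, $u(g_j)=1$ for $j\ge 2$, and both argue that if the holder of $g_1$ also holds a cheap good $g$, then removing $g$ still leaves a bundle the other agent values more than her own, violating EFX. Your additional verification that an EFX allocation exists (so the conclusion is not vacuous) and your closing remark tying the result back to the failure of balancedness are nice touches that the paper omits, but the core argument is the same.
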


\begin{proof}
Consider two agents with the same additive valuation $u(g_1)=m$ and $u(g_i)=1$ for $i=2,3,\dots,m$. Assume without loss of generality that the first agent receives $g_1$. If the first agent also receives another good $g_i$, then the second agent's value for the first agent's bundle when $g_i$ is removed is at least $m$. On the other hand, the second agent's value for her own bundle is at most $m-2$. Hence the first agent cannot receive another good besides $g_1$.
\end{proof}

\subsection{Any Number of Groups}
\label{sec:variable-many}

We now consider any number of groups and show how Theorem~\ref{thm:EF1-balanced-agent} can be partially extended to this setting. While we do not know whether EF1 or other relaxations of envy-freeness can be achieved in this general setting, we show that we can obtain a positive result for a weaker fairness notion called \emph{proportionality}. An allocation of the goods in $G$ to $k$ groups is said to be \emph{proportional} if every agent receives value at least $1/k$ of her value for the whole set $G$. As with envy-freeness, a proportional allocation does not always exist (e.g., when there are two individual agents and one valuable good), so it is necessary to consider a relaxation. Let $u_{j,\text{max}}:=\max_{t=1}^m u_j(g_t)$ denote the maximum value of agent $a_j$ for a single good.

\begin{theorem}
\label{thm:proportional-many-groups}
Let $n$ and $k$ be any positive integers. Suppose that there are $n$ agents with additive valuations, and let $n_1,\dots,n_k$ be non-negative integers with $\sum_{i=1}^k n_i=n$. There always exists a partition of the agents into $k$ groups such that group $i$ contains $n_i$ agents, along with an allocation of the goods to the $k$ groups such that each agent $a_j$ receives utility at least $\frac{1}{k}\cdot u_j(G) - \frac{k-1}{k}\cdot u_{j,\text{max}}$.
\end{theorem}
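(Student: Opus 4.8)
The plan is to induct on the number of groups $k$. The base case $k=1$ is immediate: place all $n$ agents in the single group and give them every good, so each agent $a_j$ receives $u_j(G)$, which matches the promised bound $\frac{1}{1}u_j(G)-0$. For the inductive step I assume the statement for $k-1$ groups and generalize the discrete cut-and-choose argument of Theorem~\ref{thm:EF1-balanced-agent}: I slice off a prefix of the goods for the first group, and then recurse on the remaining goods and agents with $k-1$ groups of sizes $n_2,\dots,n_k$.

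Concretely, I would arrange the goods on a line and, for each agent $a_j$, define the threshold $\tau_j:=\frac{1}{k}u_j(G)-\frac{k-1}{k}u_{j,\text{max}}$, which is exactly the utility the theorem promises $a_j$. Growing a prefix $P$ from the left one good at a time, I track how many agents are \emph{satisfied}, meaning $u_j(P)\geq\tau_j$. By additivity this count is nondecreasing, and at $P=G$ every agent is satisfied (since $\tau_j\leq u_j(G)$), so the count eventually reaches $n_1$. I stop at the first prefix $P^{*}$ at which at least $n_1$ agents are satisfied, let $g$ be the last good added, and, exactly as in Theorem~\ref{thm:EF1-balanced-agent}, build the first group from all agents satisfied by $P^{*}\setminus\{g\}$ (fewer than $n_1$ of them) together with enough of the newly satisfied agents to reach size $n_1$; this group receives $P^{*}$. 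Each agent placed in the first group is satisfied by $P^{*}$, so it gets at least $\tau_j$, as required.

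The heart of the argument is bounding the value of $P^{*}$ for the agents who are \emph{not} chosen, so that the recursion still delivers their bound. Every such agent was unsatisfied before $g$ was added, so additivity gives $u_j(P^{*})<\tau_j+u_j(g)\leq\tau_j+u_{j,\text{max}}$. The key algebraic observation is that $\tau_j+u_{j,\text{max}}=\frac{1}{k}\bigl(u_j(G)+u_{j,\text{max}}\bigr)$, so every remaining agent satisfies $u_j(P^{*})<\frac{1}{k}\bigl(u_j(G)+u_{j,\text{max}}\bigr)$. I then apply the inductive hypothesis to the $n-n_1$ remaining agents, the goods $G\setminus P^{*}$, and the sizes $n_2,\dots,n_k$, which awards each remaining agent a bundle worth at least $\frac{1}{k-1}u_j(G\setminus P^{*})-\frac{k-2}{k-1}u_{j,\text{max}}$ (using $u_j(G\setminus P^{*})=u_j(G)-u_j(P^{*})$, and that the per-good maximum over $G\setminus P^{*}$ is at most $u_{j,\text{max}}$, whose coefficient is nonpositive). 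A direct computation then shows
\[
\frac{1}{k-1}\bigl(u_j(G)-u_j(P^{*})\bigr)-\frac{k-2}{k-1}u_{j,\text{max}}\;\geq\;\tau_j \quad\Longleftrightarrow\quad u_j(P^{*})\leq\tfrac{1}{k}\bigl(u_j(G)+u_{j,\text{max}}\bigr),
\]
and the right-hand condition is exactly what was just established, so each remaining agent also attains its target $\tau_j$.

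I expect the main obstacle to be making this last reduction close cleanly: one must choose $\tau_j$ so that the satisfaction window has width exactly $u_{j,\text{max}}$, i.e.\ so that the upper bound $\frac{1}{k}(u_j(G)+u_{j,\text{max}})$ on an unchosen agent's value of $P^{*}$ coincides precisely with the largest value of $u_j(P^{*})$ for which the $(k-1)$-group guarantee still implies the $k$-group guarantee. Verifying this equivalence is a short but essential calculation, and it is where additivity is indispensable (through $u_j(G\setminus P^{*})=u_j(G)-u_j(P^{*})$ and $u_j(P^{*})=u_j(P^{*}\setminus\{g\})+u_j(g)$). The boundary cases of an empty prefix ($P^{*}=\emptyset$, when at least $n_1$ agents have $\tau_j\leq 0$) and $n_1=0$ are handled by the same inequalities and require only a brief separate check.
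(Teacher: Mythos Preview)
Your proof is correct and uses the same algorithm as the paper---arrange the goods on a line and greedily peel off a prefix for each group, stopping as soon as enough agents meet the target $\tau_j=\frac{1}{k}u_j(G)-\frac{k-1}{k}u_{j,\max}$---and both analyses rest on the same key bound $u_j(P^{*})<\tau_j+u_{j,\max}$ for any agent not yet placed. The difference is purely in how this bound is cashed out: the paper runs the prefix procedure $k-1$ times with the \emph{same} threshold $\tau_j$ throughout, sums the losses to at most $(k-1)(\tau_j+u_{j,\max})$, and observes that what remains is worth at least $\tau_j$; you instead recurse, so after one prefix you invoke the $(k-1)$-group guarantee (with its own, different threshold) and must then verify the algebraic equivalence $u_j(P^{*})\le\frac{1}{k}(u_j(G)+u_{j,\max})\Leftrightarrow\frac{1}{k-1}u_j(G\setminus P^{*})-\frac{k-2}{k-1}u_{j,\max}\ge\tau_j$. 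Your inductive packaging is a little more formal and makes explicit why the bound is exactly tight, while the paper's iterative version is shorter and avoids the extra calculation; substantively they are the same argument.
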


\begin{proof}
Let $w_j:=\frac{1}{k}\cdot u_j(G) - \frac{k-1}{k}\cdot u_{j,\text{max}}$. Arrange the goods in a line and process them from left to right. Suppose that we have already formed $i-1$ groups. Starting with an empty bundle, we add one good at a time from the left until at least $n_i$ agents $a_j$ receive utility at least $w_j$. If this condition is met before we add any good, assign $n_i$ of these agents to the $i$th group, give them an empty bundle, and remove them from consideration. Otherwise, denote by $g$ the last good added to the bundle. Assign to the $i$th group all agents who already have enough utility before $g$ is added, along with an arbitrary subset of those who have enough utility after $g$ is added so that the group has size $n_i$. Allocate the current bundle to the group, and remove the agents and goods involved from consideration. After we have formed $k-1$ groups, simply assign the remaining agents to the $k$th group and give them the leftover goods.

Consider any agent $a_j$. It suffices to show that if at most $k-1$ groups have been formed, the agent still has utility at least $w_j$ for the remaining goods. The statement holds trivially if $w_j\leq 0$, so we may assume that $w_j>0$. If a group takes an empty bundle, $a_j$ loses utility 0. Else, denote by $g$ the last good added to the bundle that a group takes. The agent $a_j$ has utility less than $w_j$ for the bundle before $g$ is added, so she has utility less than $w_j+u_{j,\text{max}}$ for the bundle with $g$. Hence the bundles that have been already allocated to groups are together worth at most $(k-1)(w_j+u_{j,\text{max}}) = \frac{k-1}{k}\cdot u_j(G) + \frac{k-1}{k}\cdot u_{j,\text{max}}$ to $a_j$. This means that $a_j$ has utility at least $u_j(G)-\left(\frac{k-1}{k}\cdot u_j(G) + \frac{k-1}{k}\cdot u_{j,\text{max}}\right) = w_j$ for the remaining goods, as claimed.
\end{proof}

Theorem~\ref{thm:proportional-many-groups} generalizes a result of \citet{Suksompong17}, which holds for individual fair division (i.e., $n_i=1$ for all $i=1,\dots,k$). The factor $\frac{k-1}{k}$ in the approximation cannot be improved even in this special case.

\section{Conclusion and Future Work}

In this paper, we examine the fairness guarantees that can be obtained in the allocation of indivisible goods among groups of agents. For two fixed groups of agents, we provide a complete picture for EF1 and EFX when agents have binary valuations, and we present further positive and negative results for more general valuations. We also introduce a new model where the partition of the agents into groups can be determined along with the allocation of the goods, and show that it is possible to attain a balance in both the agents and the goods simultaneously.

Our work leaves many open questions for future study. For two groups, one could try to establish the existence of EF1 allocations for larger group sizes, either by settling the graph-theoretic conjecture of \citet{JafariAl17} or via other means. 
Another interesting direction is to show existence guarantees for EF1 allocations that simultaneously satisfy some efficiency criterion such as Pareto optimality.
In addition, the questions that we study in this paper can also be asked for multiple groups; our techniques do not seem to extend easily to more than two groups in most cases. 
Specifically, when the goods are divisible, \citet{SegalhaleviSu20} recently showed that an envy-free allocation always exists for an arbitrary number of variable groups of any given sizes---one could try to round such an allocation in order to achieve EF1.
For individual fair division, we also leave the question of whether a balanced EF1 allocation can always be found for any number of agents; Corollary~\ref{cor:EF1-individual-balanced} gives a positive answer for the case of two agents. While the round-robin algorithm works when valuations are responsive, the question intriguingly remains open for arbitrary monotonic valuations.

\subsection*{Acknowledgments}

This work has been partially supported by the European Research Council (ERC) under grant number 639945 (ACCORD).
We would like to thank the anonymous reviewers for their helpful comments.

\bibliographystyle{named}
\bibliography{group-ef}

\newpage

\end{document}